\patchcmd{\@settitle}{\uppercasenonmath\@title}{\scshape\large}{}{}
\patchcmd{\@setauthors}{\MakeUppercase}{\scshape\normalsize}{}{}
\newtheorem{theorem}{Theorem}[section]
\newtheorem{example}[theorem]{Example}
\newtheorem{remark}[theorem]{Remark}
\newtheorem{lemma}[theorem]{Lemma}
\newtheorem{proposition}[theorem]{Proposition}
\newtheorem{definition}[theorem]{Definition}
\newtheorem{corollary}[theorem]{Corollary}
\newenvironment{rsttheorem}[1]
{\innerrsttheorem}
{\endinnerrsttheorem}
\newenvironment{rstcorollary}[1]
{\innerrstcorollary}
{\endinnerrstcorollary}
\newenvironment{rstexample}[1]
{\innerrstexample}
{\endinnerrstexample}
\setlist[enumerate]{label=(\alph*)}
\DeclareMathOperator*{\argmax}{arg\,max}
\newcommand{\inte}{\mathrm{int}}
\newcommand{\ALG}{\mathrm{ALG}}
\newcommand{\con}{\mathrm{con}}
\newcommand{\LP}{\mathrm{LP}}
\newcommand{\Lift}{\mathrm{Lift}_u}
\newcommand{\val}{\mathrm{val}}
\newcommand{\Reg}{\mathrm{Reg}}
\newcommand{\norm}[1]{\lVert #1 \rVert}
\newcommand{\abs}[1]{\lvert#1\rvert}
\newcommand{\R}{\mathbb{R}}
\newcommand{\Z}{\mathbb{Z}}
\newcommand{\N}{\mathbb{N}}
\NewDocumentCommand{\freeset}{O{}}{S^{#1}}
\NewDocumentCommand{\FeasStrats}{O{}}{W_{#1}}
\NewDocumentCommand{\FeasStratsInt}{O{}}{W_{#1}^\inte}
\NewDocumentCommand{\FeasStratsCon}{O{}}{W_{#1}^\con}
\NewDocumentCommand{\FeasStratsRel}{O{}}{\hat{W}_{#1}}
\newcommand{\Simp}{A}
\newcommand{\Rleq}{R^<}
\newcommand{\Req}{R^=}
\newcommand{\wrt}{w.r.t.\ }
\newcommand{\wlg}{w.l.o.g.\ }
\NewDocumentCommand{\NeInt}{O{\ }}{TODO{#1}}
\newcommand{\defset}[3][\defsep]{\set{#2#1#3}}
\newcommand{\Defset}[3][\defsep]{\Set{#2#1#3}}
\newcommand{\Set}[1]{\left\{#1\right\}}
\newcommand{\st}{\text{s.t.}}
\newcommand{\define}{\mathrel{{\mathop:}{=}}}
\begin{document}

\begin{abstract}
	We consider combinatorial multi-item markets
and propose the notion of a $\Delta$-\emph{regret Walras equilibrium},
which is an allocation of items to players and a set of item prices that
achieve the following goals: prices clear the market, the allocation
is capacity-feasible, and the players' strategies lead to a total
regret of $\Delta$. The regret is defined as the sum of individual
player regrets measured by the utility gap with respect to the optimal
item bundle given the prices.
We derive a complete characterization for the existence of
$\Delta$-regret equilibria by introducing the concept of a \emph{parameterized} social welfare problem,
where the right-hand side of the original social welfare problem is changed.
Our characterization then relates the achievable regret value with the associated
duality/integrality gap of the parameterized social welfare problem.
For the special case of monotone valuations this translates to regret bounds 
recovering the duality/integrality gap of the original social welfare problem.
We further establish an interesting connection
to the area of sensitivity  theory in linear optimization.
We show that the sensitivity gap of the optimal-value function
of two (configuration) linear programs with changed right-hand side
can be used to establish a bound on the achievable regret.
Finally, we use these general structural results to translate
known approximation algorithms for the social welfare optimization
problem into algorithms computing low-regret Walras equilibria. We
also demonstrate how to derive strong lower bounds based on
integrality and duality gaps but also based on NP-complexity theory.

\end{abstract}

\title[Minimal Regret Walras Equilibria for Combinatorial Markets]%
{Minimal Regret Walras Equilibria\\ for Combinatorial Markets}

\author[A. Duguet, T. Harks, M. Schmidt, J. Schwarz]%
{Aloïs Duguet, Tobias Harks, Martin Schmidt, Julian Schwarz}

\address[A. Duguet, M. Schmidt]{%
  Trier University,
  Department of Mathematics,
  Universitätsring 15,
  54296 Trier,
  Germany}
\email{duguet@uni-trier.de}
\email{martin.schmidt@uni-trier.de}

\address[T. Harks, J. Schwarz]{%
  University of Passau,
  Faculty of Computer Science and Mathematics,
  94032 Passau,
  Germany}
\email{tobias.harks@uni-passau.de}
\email{julian.schwarz@uni-passau.de}

\date{\today}

\keywords{Combinatorial markets,
Walras equilibria,
Regret equilibria,
Duality gaps,
Integrality gaps,
Sensitivity gaps%
%
%
}
\subjclass[2020]{90C11, 
90C47, 
90C57, 
91-08, 
91B26, %
91B50%

\maketitle

\section{Introduction}

Walras market equilibria \parencite{walras54} constitute
a central topic in economics \parencite{Shapley1971}, computer science
\parencite{Blumrosen_Nisan_2007}, and mathematics \parencite{Murota:2003}.
An important subclass are \emph{combinatorial markets}, where
a set $R=\{1,\dots,m\}$ of indivisible items are available at integer
multiplicity $u=(u_j)_{j\in R}$.
There is a set of players $N=\{1,\ldots,n\}$ with
valuations over bundles of these items given by
$\pi_i:X_i\rightarrow \R$, where $X_i\subset \Z_+^m$
is the set of available bundles (strategies).
The \emph{Walras equilibrium} for quasi-linear utilities asks for
(anonymous) per-unit item prices \mbox{$\lambda_j \geq 0$}, $j\in R$,
and an allocation $x_i \in X_i$, $i\in N$, of items to players so that the
following conditions are satisfied
(cf.~\textcite[Def.~11.12]{Blumrosen_Nisan_2007}):
\begin{enumerate}
\item $\sum_{i\in N} x_i \leq u$ (demand is bounded by the supply),
\item $x_i\in \arg\max_{z_i \in X_i}\{\pi_i(z_i)-\lambda^\top z_i\}\; \forall i\in
  N$ (players optimize their utility),
\item $\lambda^\top(\sum_{i\in N} x_i - u)=0$ (prices clear the market).
\end{enumerate}
While the first two conditions are self-explanatory, the last
complementarity condition is more interesting.
It requires that item prices may only be strictly positive  if these items
are sold at capacity. This condition resembles the economic principle
of a competitive equilibrium, where prices lead to a balance of demand
and supply.\footnote{Without this condition (and $0\in X$), the equilibrium problem
  becomes uninteresting, because setting prices to infinity
  always leads to an equilibrium, where nothing is sold.}

The Walras equilibrium concept is quite convincing: given market
prices, the players are happy with their bundle, no envy among players
occurs, the market clears, and the prices are simple, anonymous, and
can be easily communicated.
A further remarkable property is captured in the first welfare
theorem of economics: When a Walras equilibrium exists, the allocation
maximizes the social welfare defined as the sum of valuations of
the players.
There is, however, a main drawback, especially for combinatorial
markets, because the existence of Walras equilibria is not guaranteed. Only
for special cases, e.g., when valuations satisfy a gross-substitute (GS)
property \parencite{Kelso82,Gul99,Ausubel2002} or,
more generally, when they satisfy a discrete convexity condition
\parencite{Danilov2001}, an equilibrium exists.

\subsection{Our Results}

In order to address the possible non-existence of Walras equilibria,
we propose the concept of \emph{regret Walras equilibria} as a
relaxation. Here, we search for a tuple $(x,\lambda)$, i.e.,
an allocation $x=(x_i)_{i\in N} \in X \define \times_{i\in N} X_i$
respecting the supply bounds and market-clearing prices $\lambda\in
\R_+^m$ that together minimize the induced regret defined as
\[ \Reg(x,\lambda) \define \sum_{i\in N}\left(\max_{z_i\in X_i}
    \Set{\pi_i(z_i)-\lambda^\top z_i}- \left(\pi_i(x_i)-\lambda^\top
    x_i\right)\right).\]
The regret of $(x,\lambda)$ measures the aggregated differences of the
utility obtainable by a best response~$z$ under $\lambda$  and the current
utilities under $(x,\lambda)$.
Given any $\Delta>0$, we call a pair $(x,\lambda)$ a $\Delta$-regret
Walras equilibrium if $\Reg(x,\lambda)\leq \Delta$.
Clearly, $(x,\lambda)$ is a(n exact) Walras equilibrium if and only if
$\Reg(x,\lambda)=0$ and it is also easy to see that a tuple
$(x,\lambda)$ with regret $\Delta\geq 0$ is also an \emph{additive}
$\Delta$-approximate pure Nash equilibrium of the strategic game
induced by the fixed prices $\lambda$.
Note that the term ``regret'' is different to that in the area
of \emph{regret learning} in games; see also the recent work
of \textcite{DASKALAKIS2022,Branzei23} in combinatorial auctions. There, regret is
defined as the average utility gained over the history of play
compared to a best fixed strategy in hindsight.

\subsubsection*{Characterization Results}

As our first main result, we establish a complete characterization of the
existence of $\Delta$-regret Walras equilibria.
 To this end, we introduce an $x$-\emph{parameterized} social welfare
problem
\[ \max_{z \in X} \quad \pi(z) \define \sum_{i\in N} \pi_i(z_i)
  \quad \text{s.t.} \quad
  \ell_j(z) \leq \ell_j(x)\;  \forall j\in R \text{ with } u_j=\ell_j(x), \]
where for  $z\in X$ we use the shorthand $\ell(z) \define \sum_{i\in
  N}z_i$. Let us associate with this problem a dual problem
\[ \min\defset{\mu_x(\lambda)}{\lambda\in \R^m_+ \text{ with
    }\lambda_j=0 \text{ for all }j\in R \text{ with }u_j>\ell_j(x)},\]
where $\mu_x(\lambda)$ is the $x$-parameterized Lagrangian dual. The  $x$-parameterized \emph{duality gap}
of a primal-dual feasible pair $(z,\lambda)$ is defined as
$\mu_x(\lambda)-\pi(z)$.
Within this framework, we establish the following characterization result
without imposing  any restrictions on the valuations and strategy spaces:

\begin{rsttheorem}{\ref{thm:main-general}}[and Corollaries~\ref{cor:main-general} and~\ref{cor:nec} --  Informal]\label{thm:intro-nec}
  \hfill
  \begin{enumerate}
  \item 	The regret of a capacity-feasible $x$ \wrt a complementary price vector $\lambda$ equals its
    $x$-parameterized duality gap $\mu_x(\lambda)-\pi(x)$
    and is minimized for fixed $x$ precisely when $\lambda$ is optimal for the $x$-parameterized dual problem.
  \end{enumerate}
  The convexification of the $x$-parameterized social welfare problem leads to a certain
  $x$-parameterized configuration~LP and we show that its integrality gap \wrt $x$ coincides with the $x$-parameterized duality gap, implying with (a):
  \begin{enumerate}[resume]
  \item For $\Delta >0$, there exists a $\Delta$-Walras equilibrium if and only if there exists $x$ with $\ell(x)\leq u$ and the corresponding $x$-parameterized integrality gap
    of the configuration LP being upper bounded by $\Delta$.
  \end{enumerate}
  Finally, by observing that the $x$-parameterized integrality gap is an upper bound on the integrality gap of the classical social welfare problem, we obtain the following necessary condition by (a):
  \begin{enumerate}[resume]
  \item If $(x,\lambda)$ is a $\Delta$-regret Walras equilibrium, then the integrality gap of the classical social welfare problem is bounded by $\Delta$.
  \end{enumerate}
\end{rsttheorem}

As a direct consequence of this general result, we get for
 the important special case of monotone valuations:

\begin{rstcorollary}{\ref{cor:monotone}}[Informal]
  Consider instances with monotone valuations and allocations $x\in X$
  for which, w.l.o.g., the entire capacity $u$ is used, i.e. $\ell(x)=u$.
  \begin{enumerate}
  \item For any $\Delta$-regret Walras equilibrium $(x,\lambda)$, the regret $\Delta$ is equal to the duality gap of $(x,\lambda)$ w.r.t. the (classical) social welfare problem.
  \item The optimal $\Delta$-regret Walras equilibrium is obtained by a pair of primal-dual optimal solutions of the social welfare problem.
\end{enumerate}
\end{rstcorollary}

This last result implies that the first welfare theorem
remains valid for the concept of regret-Walras equilibria, i.e., the best possible regret for any instance is
achieved for an optimal social welfare solution.

We then draw a connection to the area of proximity or sensitivity theory in linear programming.
The $x$-parameterized configuration LP can be interpreted as an instance
of the classical configuration LP with enlarged right-hand side. The sensitivity gap of the optimal-value function
of these two LPs can be used to establish a bound on the achievable
regret.

\begin{rsttheorem}{\ref{thm:general-sens} and \ref{thm:suf-condition}}[Informal]
  Consider instances with general valuations.
  \begin{enumerate}
  \item For any $x \in X$ with $\sum_{i\in N} x_i\leq u$,
    there are prices $\lambda$ so that $(x, \lambda)$
    becomes a $\Delta$-regret Walras equilibrium for
    $\Delta\leq \rho(b')-\rho(b)+\iota(x)$,
    where $\rho(q)$ denotes the optimal-value function for the configuration LP
    with right-hand side~$q$. In the above result, $b$ is the right-hand side of
    the original configuration~LP, and $b'$ denotes the
    right-hand side of the enlarged one. Moreover, $\iota(x)$ denotes
    the integrality gap of $x$ w.r.t. to the original configuration~LP.
    This way, we obtain non-trivial bounds parameterized in the input data using
    proximity results in \textcite{CookGST86}.
  \item While proximity bounds usually rely on condition numbers
  of the constraint matrices, we further derive a weaker but  closed-form bound:
    Any  pair $(x, \lambda)$ feasible for the primal-dual problem with
    duality gap $\gamma$ can be turned into a $\Delta$-regret Walras
    equilibrium $(x,\bar \lambda)$ with $\Delta\leq \gamma
    (1+(n-1)u_{\max})$, where $u_{\max}$ is the largest capacity value.
  \end{enumerate}
\end{rsttheorem}

Note that the above bound of $ \rho(b')-\rho(b)+\iota(x)$ has two
components: the LP-sensitivity effect $\rho(b')-\rho(b)$ and the integrality
gap effect  $\iota(x)$. So this indicates that there might be instances
for which the optimal-regret solution has a rather low LP-sensitivity effect
at the cost of a higher duality/integrality gap.
Indeed we give such an example showing that
the ``first welfare theorem''-property does not hold anymore.

\subsubsection*{Polynomial-Time Algorithms}

So far, our results are purely structural and come with little
algorithmic flavor. However, their generality allows to employ
the use of existing approximation algorithms for the social
welfare problem in a black-box fashion. Besides the usual assumption of
handling valuations and best-response mappings via oracle accesses,
the key concept is based on
the notion of \emph{integrality-gap-verifying} algorithms as
introduced by \textcite{ElbassioniFS13}.
The idea is to postulate that an approximation algorithm for the
social welfare problem with guarantee $\alpha$ also
certifies an integrality gap of $\alpha$.

\begin{rsttheorem}{\ref{thm:PolyAlg}}
  Let $\mathcal{I}$ be a class of instances of the social welfare
  problem that admit a polynomial-time demand oracle.
  Let $\ALG$ be an approximation algorithm verifying an additive
  integrality gap of (at most) $\alpha\geq 0$ for the social welfare
  problem.
  Then, the following holds true.
  \begin{enumerate}
  \item\label{item:1} If $\mathcal{I}$ contains only instances
    with monotone valuations, then there is a polynomial-time
    algorithm (based on $\ALG$) that computes $\Delta$-regret Walras
    equilibria with $\Delta\leq \alpha$.
  \item\label{item:2}  If $\mathcal{I}$ contains general
    instances  (general valuations), then there is a polynomial-time
    algorithm (based on $\ALG$) that computes $\Delta$-regret Walras
    equilibria  with $\Delta\leq
    \alpha(1+(n-1)u_{\max})$.
  \end{enumerate}
\end{rsttheorem}

\subsubsection*{Lower Bounds}

As  mentioned before, we can use the necessary optimality
conditions of Theorem~\ref{thm:intro-nec} to establish lower bounds on the existence
of low-regret equilibria. While the integrality gap gives an instance-specific
non-existence certificate, we can even employ NP-complexity theory
to obtain non-existence of good-regret bounds for classes of valuations.
The following result is greatly inspired by that of~\textcite{Roughgarden:2015}
relating the existence of exact Walras equilibria to the complexity dichotomy
of the demand and the social welfare problem.

\begin{rsttheorem}{\ref{thm:ConnectionNPComp}}
  Consider a class of instances that admit a polynomial-time demand
  oracle and for which the optimal value of the social welfare problem
  cannot be approximated within an additive term of $\delta$, unless
  $P= NP$. Then, assuming $P\neq NP$, the guaranteed existence of
  $\delta$-regret Walras equilibria for all instances in this class is
  ruled out.
\end{rsttheorem}

\subsubsection*{Applications}
Price equilibria for nonconvex models are considered, e.g., in the area
of electricity markets; see~\textcite{AhunbayBK25,guo2021copositive}.
In these applications it is known that exact Walras equilibria might not exist and in response, researchers considered relaxations such as  the notion of lost opportunity costs (LOC)
of allocations with respect to the convex-hull
pricing, which corresponds to the solution of the dual configuration
LP, see~\textcite{Andrianesis22}.
The LOC corresponds to our notion of regret
except that convex-hull prices may not be feasible in our model, since
market-clearing conditions can be violated. Our tight characterization results can be
useful in this area since we can use an optimal solution $x$ to the
social welfare problem (or any other feasible solution computed by a
suitable solver) in order to define a corresponding $x$-parameterized
LP.
By Theorem~\ref{thm:main-general}, solving the dual of this LP yields
optimal $x$-implementing Walras prices and the LP value minus the
current social welfare $\pi(x)$  yields a precise bound on the
achieved regret, allowing for an economic interpretation of the
solution $x$.

\subsection{Related Work}
\subsubsection{Existence of Walras Equilibria}
The existence of Walras equilibria and their computation
is a central topic in several areas and consequently, there is a quite
large literature.
Let us refer here to the
survey of \textcite{Bichler21} for a comprehensive overview.
For the problem of allocating indivisible single-unit items,
there are several characterizations of the
existence of equilibria related to
the gross-substitute property of valuations,
see~\textcite{Kelso82,Gul99,Ausubel2002}.
Several works established connections
of the equilibrium existence problem w.r.t.\ LP-duality and
integrality \parencite{Bikhchandani1997,Shapley1971}.
 \textcite{MurotaT03,Murota:2003} established connections
between the gross substitutability property and M-convexity properties
of demand sets and valuations.

 \textcite{Danilov2001} investigated the existence of Walras equilibria
for multi-unit auctions and identified general
conditions on the demand sets and valuations
related to discrete convexity, see also
\textcite{Milgrom2009,Ausubel2006,FujishigeY03,Sun2009}.
In \textcite{Baldwin2019}, the authors explored a connection with tropical
geometry and gave necessary and sufficient condition for the existence
of a competitive equilibrium in product-mix auctions of indivisible
goods.
For a comparison of the above works, especially with respect to the
role of discrete convexity, we refer to the excellent survey
by \textcite{Shioura2015}.
 \textcite{Candogan2018,CandoganP18}  show that valuations classes
(beyond GS valuations) based on graphical structures also imply the
existence of Walras equilibria. Their proof uses integrality of
optimal solutions of an associated linear min-cost flow
formulation. LP characterizations for the existence of Walras
equilibria were given
by~\textcite{Bikhchandani1997,BikhchandaniO02,Candogan2018,Roughgarden:2015}.

\subsubsection{Relaxations of Walras Equilibria}
There have been several proposals of relaxed notions of Walras equilibria in the literature in order
to recover existence.
One way is to allow arbitrary bundle prices instead of item prices,
see \textcite{BikhchandaniO02,Roughgarden:2015}. This approach leads
to stronger existence results but loses the the simple structure of
item prices. Other works relax the market-clearing condition
(see~\textcite{Budish11,Guruswami05,Deligkas24,Vazirani11}) at the
cost of inducing socially inefficient equilibria;
see~\textcite{FeldmanGL16} for discouraging  examples of this effect.
 \textcite{FeldmanGL16}  proposed to bundle the item sets before
selling. The authors showed how to do this without losing too much social
welfare at equilibrium.
Our paper proposes to stick with market clearing but relax optimality
of players strategies---measured in terms of total regret, which is an
\emph{additive} form of utility approximation.
Multiplicative notions of approximate market equilibria have been
considered by~\textcite{Codenotti05,GargTV25} for special (concave)
valuations and the divisible good setting.
There have been numerous works using additive approximations of (Nash)
equilibria; see~\textcite{DeligkasFS20} and references therein. Let
us refer to the work of~\textcite{Daskalakis13}
for a detailed overview on pros and cons of additive versus multiplicative
approximations of equilibria and also how one can be converted to the other.


\section{Model}\label{sec:model}

A combinatorial allocation model is described by a tuple
$I=(N,R,u,X,\pi)$, where $N=\{1,\dots,n\}$ describes a nonempty and finite
set of players and $R=\{1,\dots, m\}$ denotes a nonempty and finite set of
items or resources that are available with multiplicity $u_j\in \Z_+,
j\in R$.
Here and in what follows, $\Z_+$ and $\R_+$  denote the nonnegative natural
and real numbers including $0$, respectively.
The set $X \define \times_{i\in N} X_i$ describes the combined strategy space
of the players, where $X_i=\{x_i^1,\dots,x_i^{k_i}\} \subseteq
\Z_+^m$ with $x_i^j\neq x_i^l$ for $j\neq l$  is the nonempty and finite integral strategy space of
player~\mbox{$i \in N$}.
We have $k_i \define |X_i|\in \N$ and define $k \define \sum_{i\in N}k_i$.
For $ x_i=(x_{ij})_{j\in R} \in X_i$, the entry $x_{ij}\in \Z_+$ is
the integer amount of resource $j$ consumed by player $i$.
We  call the vector of resource usage $ x=(x_{ij})\in X\subset
\Z_+^{n m}$ a \emph{strategy profile}.
Given $ x\in X$, we can define the \emph{load} on resource $j\in R$ as
$\ell_j( x) \define \sum_{i\in N}  x_{ij},$ where $x_{ij}$ is the
$j$th component of $x_i$.
We assume that every $x_i\in X_i$, $i\in N$, is capacity feasible, meaning
that $x_{ij}\leq u_j$ holds for all $j\in R, i\in N$. This does not imply
that every $x\in X$ is capacity feasible, i.e., $\ell(x)\leq u$ does
not need to hold for all $x\in X$.
Let us define by $X(u) \define \defset{x\in X }{\ell(x)\leq u}$ the
set of capacity feasible strategy profiles.
An important special case arises, if $X_i\subset \{0,1\}^m , i\in N$.
In this case, there is a one-to-one correspondence between $x_i\in
X_i$ and the subset $S_i \define \defset{j\in R}{x_{ij}=1} \subseteq R$
and, thus, we can use the notation $S_i$ and $x_i$ interchangeably.

We assume that the utility or valuation function of a player $i \in N$
maps the obtained resources $x_i \in X_i$ to some utility value
$\pi_i(x_i)\in \R$ for some function  $\pi_i: X_i\to \R$
that satisfies $\pi_i(0)=0$ for all $i\in N$.
Certainly, we could use $\Z^m_+$  as the strategy space of the players and remove
all ``infeasible'' strategies $z_i\notin X_i$ by assigning a low value to $\pi_i(z_i)$.
However, the set $X_i$ can carry an interesting structure, e.g., network
flow valuations as used vy \textcite{GargTV25},
and therefore we prefer to use $X_i\subset\Z^m_+$.
In the remainder of the paper, unless stated otherwise,
we assume that the resource allocation model is of the form described above.

We are concerned with the problem of defining \emph{item prices}
$\lambda_j \geq 0$, $j\in R$, on the resources in order to clear the
market.
If player $i$ uses item $j$  at level $x_{ij}$, she needs to pay
$\lambda_j  x_{ij}$. The quantities $\pi_i(x_i)$ and
$\lambda^\top  x_i$ are assumed to be normalized
to represent the same unit and we assume that the overall utility is quasi-linear:
$ \pi_i(x_i)-\lambda^\top  x_i$.
\begin{definition}[Walras Equilibria]\label{def:walras}
  A tuple $( x^*, \lambda^*)\in
  X\times\R_+^m$  is a \emph{Walras equilibrium}, if the
  following two conditions are satisfied:
  \begin{enumerate}[ref=Definition~\ref{def:walras}(\alph*)]
  \item\label{cond1-w} $\pi_{i}(x^*_i)-(\lambda^*)^{\top}
    x_i^*\geq  \max \defset{\pi_{i}(x_i)-(\lambda^*)^{\top}
      x_i}{x_i\in X_i}$ for all $i\in N$.
  \item\label{cond2w} $x^*\in X(u)$ and $\lambda^*\in \Lambda(x^*,u)
    \define \defset{\lambda\in \R^m_+}{ \lambda^\top(\ell(x^*)-u)=0}$.
  \end{enumerate}
\end{definition}
\ref{cond1-w} requires that every player is happy with the
current item bundle~$x^*_i$ given the prices $\lambda^*$.
\ref{cond2w} requires $x^*$ to be capacity feasible and
$\lambda^*\in \Lambda(x^*,u)$ to fulfill the economic principle that
resources with slack demand must have zero price, i.e.,
$\ell_j( x^*) < u_j$ implies $\lambda_j^*=0$ for all $j\in R$.

Note that Walras equilibria need not exist due to the integrality
(non-convexity) of the strategy spaces and the possible non-convexity
of the valuation functions.
In this paper, we propose a simple but natural relaxation of Walras
equilibria towards the notion of \emph{regret Walras equilibria}.
To this end, we define for $( x, \lambda)\in X\times\R_+^m$ the
\emph{regret} of player $i\in N$ as
\begin{equation*}
  \Reg_i( x_i, \lambda) \define \max
  \Defset{\pi_{i}(z_i)-\lambda^{\top} z_i}{z_i\in X_i}
  - \left(\pi_{i}(x_i) - \lambda^{\top} x_i\right).
\end{equation*}
The regret measures the distance from the current utility value under
$( x_i, \lambda)$ to the maximum utility value achievable for
strategies $x_i\in X_i$ under prices $\lambda$.
Note that $\Reg_i( x_i, \lambda)\geq 0$.
The \emph{total regret} for a tuple  $( x, \lambda)\in X\times \R^m_+$
is defined as
\begin{equation*}
  \Reg(x, \lambda) \define \sum_{i\in N}\Reg_i( x_i, \lambda).
\end{equation*}
For fixed $\lambda$, the regret function is equal to the
Nikaido--Isoda function of the induced strategic game; see
\textcite{nikaido1955} and \textcite{Harks24}.
We immediately get the following equivalences
\begin{align*}
  ( x^*, \lambda^*) \text{ is a Walras equilibrium}
  \iff
  \Reg_i( x_i, \lambda)=0 \; \forall i\in N
  \iff
  \Reg( x, \lambda) = 0.
\end{align*}

\begin{definition}[$\Delta$-Regret Walras Equilibria]
  \label{def:apx-implementable}
Let $\Delta\in \R_+$.
A tuple $( x^*, \lambda^*)\in
    X\times\R_+^m$ is a \emph{$\Delta$-regret Walras
      equilibrium}, if the following two conditions are satisfied:
    \begin{enumerate}[ref=Definition~\ref{def:apx-implementable}(\alph*)]
    \item\label{cond2} $ \Reg( x^*, \lambda^*) \leq \Delta$.
    \item\label{cond3} $x^* \in X(u)$ and $\lambda^*\in \Lambda(x^*,u)$.
    \end{enumerate}
\end{definition}
\ref{cond2} relaxes the condition in~\ref{cond1-w}
in the sense that a Walras equilibrium is a special $\Delta$-regret
Walras equilibrium for $\Delta=0$.
\ref{cond3} is the same as~\ref{cond2w}.
Please also note that a $\Delta$-regret equilibrium $ x^*$
is also an \emph{additive} approximate Walras equilibrium (in the
sense that $\Reg_i( x_i, \lambda)\leq \Delta$ for all $i\in N$).
Conversely, an additive $\Delta$-approximate Walras equilibrium
induces an $n\Delta$-regret equilibrium.

Let us give a simple example of an instance that
has no exact Walras equilibrium but
admits a $0.5$-regret Walras equilibrium.

\begin{example}
  \label{ex:nonexistence}
  There are three players $N=\{1,2,3\}$ and three resources
  $R=\{1,2,3\}$ with unit capacity each $u=(1,1,1)$.
  The strategy spaces are given by $X_i=\{0,1\}^3$ for $i\in N$.
  The players have single-minded (monotone) utility functions of the
  form
  \begin{align*}
    \pi_1(S)
    & = \begin{cases}1, &\text{ if }S\supseteq\{1,2\},\\
      0,& \text{ else,}
    \end{cases}
    \\
    \pi_2(S)
    & = \begin{cases}
      1, &\text{ if }S\supseteq\{2,3\},\\
      0,& \text{ else,}\end{cases}
    \\
    \pi_3(S)
    & = \begin{cases}
      1, &\text{ if }S\supseteq\{1,3\},\\
      0,& \text{ else.}
    \end{cases}
  \end{align*}
  The critical sets are visualized in Figure~\ref{fig:counter-ex}.
  We now investigate the regret of \mbox{$x \in X(u)$}.
  Any strategy profile
  in which non of the players gets (a superset of) their desired bundle
   has regret $\Reg(0,0)=3$,
  since all players have a regret of $1$.
 Let us argue in the following that the minimal regret for any
 allocation $x$ in which one player gets his desired bundle and the
 others get nothing have a total regret of at least $1$. By symmetry,
 it is enough to consider the case where player 1 gets his desired
 set, i.e.~$x$ with $x_1 = \{1,2\}$ and $x_2=x_3=0$.
  By the required complementary condition, we can only price the
  resources $1,2$.
  The regrets are then given by
  \begin{equation*}
    \Reg_i(x,\lambda) = \begin{cases}
      \max\{\lambda_1+\lambda_2-1,0\}, &\text{ if } i = 1, \\
      \max\{1-\lambda_2,0\}, &\text{ if } i = 2, \\
      \max\{1-\lambda_1,0\}, &\text{ if } i = 3. \\
    \end{cases}
  \end{equation*}
  It is easy to verify that for any $\lambda_1,\lambda_2 \in \R$, the
  prices $\lambda'_1 = \lambda'_2 = \nicefrac{\lambda_1 +
    \lambda_2}{2}$ yield an at least as good regret and hence we
  may assume that prices are equal, say $\alpha\geq 0$.
  The total regret of such a solution $(x,\lambda(\alpha))$ for
  $\lambda(\alpha) = (\alpha,\alpha,0)$ is then given by
  \begin{equation*}
    \Reg(x,\lambda(\alpha)) = \max\{2 \alpha-1,0\} + 2 \max\{1-\alpha,0\}.
  \end{equation*}
  This parameterized total regret is minimized at
  $\alpha^*=\nicefrac{1}{2}$ leading to a regret of
  $\Reg(x,\lambda(\alpha^*))=1$.

  Now the only other possible allocation is to allocate the grand set $\{1,2,3\}$
   to one of the players, \wlg say $1$.
  Then, we may have positive prices on all resources, say
  $\lambda_i\geq 0$, $i=1,2,3$. A similar argumentation as above
  shows that the
  optimal prices are given by $\lambda_i^* = \nicefrac{1}{2}$ for $i = 1,2,3$
  leading to an optimal regret of $\Reg(x^*,\lambda^*)=\nicefrac{1}{2}.$
  Here, only player $1$ suffers a regret of $\nicefrac{1}{2}$, because
  that player envies the smaller set~$\{1,2\}$ at the lower
  price $1$.
  Hence, we can conclude that no exact Walras equilibrium exists
  and the optimal regret is~$\nicefrac{1}{2}$.
  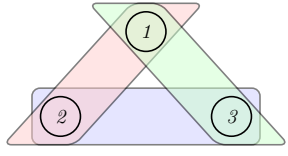
\begin{figure}
    \centering
    \scalebox{0.75}{
  \begin{tikzpicture}[scale=1,every
    node/.style={draw,circle,minimum size=7mm,inner
      sep=1pt,thick},
    line/.style={thick}]

    \draw[line,rounded corners, fill=blue!20, opacity=0.5] (-1.5,0.5) rectangle (2.5,-0.5);
    \draw[line,rounded corners, fill=red!20, opacity=0.5] (-2,-0.5) --(-0.75,-0.5)--(1.5,2)-- (0.25,2) --cycle;
    \draw[line,rounded corners, fill=green!20, opacity=0.5] (3,-0.5)-- (1.75,-0.5) -- (-0.5,2)--(0.75,2) --cycle;

    \node (n1) at (0.5,1.5) {1};
    \node (n2) at (-1,0) {2};
    \node (n3) at (2,0)  {3};
  \end{tikzpicture}
}

    \caption{Construction of a model $I$ with $3$ players and $3$
      resources.}
    \label{fig:counter-ex}
  \end{figure}
\end{example}

The goal of this paper is to understand the optimization problem
  \begin{align}
    \tag{$\mathrm{Min}$-$\mathrm{Regret}$}
    \label{regret-opt}
    \min_{(x,\lambda)} \quad
    & \Reg( x, \lambda)
    \\
    \st \quad
    & \ell(x)\leq u,\label{eq:capacity}\\
    & \lambda^\top(\ell(x)- u)=0, \label{eq:complementarity}\\
    &(x,\lambda)\in X\times\R^m_+.\notag
  \end{align}

Let us remark that we do not impose any assumptions on the strategy
spaces and valuation functions and subsequently the existence
of exact ($0$-regret) Walras equilibrium is not guaranteed.
Before we study regret Walras equilibria in more detail, let us
show by a small example that for optimal regret Walras equilibria
(which always exist in our model) the ``first-welfare-theorem'' property does
not hold anymore. Recall that  whenever a  $0$-regret Walras equilibrium exists, it is attained at  an optimal solution
of the classical welfare-maximization problem (cf. \textcite[Prop.~1]{Bikhchandani1997}):
\begin{align}
	\max_{x}\; \pi(x) \quad
	\st\; \ell(x) \leq u,\; x \in X.
\tag{P}\label{price-opt}
\end{align}
\begin{proposition}
	There are combinatorial markets $I$ such that for any optimal solution $(x^*,\lambda^*)$
	of \eqref{regret-opt},
	$x^*$ is not optimal for the social welfare problem \eqref{price-opt}.
\end{proposition}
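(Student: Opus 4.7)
The plan is to construct a counterexample: an explicit instance whose unique regret-minimizing allocation is strictly suboptimal for the social welfare problem. I would extend the strict single-minded variant of Example~\ref{ex:nonexistence} discussed at the beginning of Section~\ref{sec:GenVal} by adding a fourth ``grand-bundle'' player. Concretely, take $N=\{1,2,3,4\}$, $R=\{1,2,3\}$, $u=(1,1,1)$; keep $X_1=\{0,(1,1,0)\}$, $X_2=\{0,(0,1,1)\}$, $X_3=\{0,(1,0,1)\}$ with the strict single-minded valuations $\pi_i=1$ on the non-zero strategy and $0$ else; and introduce $X_4=\{0,(1,1,1)\}$ with $\pi_4((1,1,1))=v$ for a parameter $v\in(1/2,1)$ to be fixed, for concreteness $v=3/5$.

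Because the three critical pairs of players $1,2,3$ pairwise intersect in exactly one item and each item has unit capacity, a direct enumeration shows that only five capacity-feasible profiles exist: the all-empty one, the three ``singleton'' profiles in which exactly one of players $1,2,3$ holds its critical pair, and the profile $x^{(4)}$ in which player~$4$ holds the grand bundle. I would then compute $\min_{\lambda\in\Lambda(x,u)}\Reg(x,\lambda)$ for each class. The all-empty profile forces $\lambda=0$ and gives regret $3+v$. For a singleton profile, say with player~$1$ active, complementarity forces $\lambda_3=0$, and a case analysis on $\lambda_1+\lambda_2$ (essentially the one carried out in the continued Example~\ref{ex:nonexistence}) yields minimum total regret $1$: the additional term $\max\{0,v-\lambda_1-\lambda_2\}$ contributed by player~$4$ vanishes at any price attaining the three-player minimum of $1$, because that minimum forces $\lambda_1+\lambda_2\geq 1>v$. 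For $x^{(4)}$, the map $\lambda\mapsto\Reg(x^{(4)},\lambda)$ is convex in $\lambda$ and $S_3$-invariant under permutations of item indices, so a Jensen averaging argument reduces the minimization to the diagonal $\lambda=(t,t,t)$, along which the total regret is $3\max\{0,1-2t\}+\max\{0,3t-v\}$, with minimum $3/2-v$ attained at $t=1/2$.

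With $v=3/5$ the profile $x^{(4)}$ has minimum regret $9/10$, strictly smaller than the minimum regret $1$ of every singleton profile (and far smaller than that of the all-empty one). Hence $x^{(4)}$ is the unique optimum of~\eqref{regret-opt}, while its social welfare equals $v=3/5$, strictly below the social welfare optimum $1$ attained by every singleton profile. This establishes the proposition. The subtle step is the symmetrization for $x^{(4)}$: convexity follows because the regret is a sum of pointwise maxima of affine functions, and $S_3$-invariance follows because the three pair-regrets of players $1,2,3$ permute among themselves while player~$4$'s regret depends only on $\lambda_1+\lambda_2+\lambda_3$; averaging any minimizer over the $S_3$-orbit and applying Jensen's inequality produces a symmetric minimizer, reducing the multivariate optimization to the one-variable analysis above and avoiding a full case split over asymmetric price vectors.
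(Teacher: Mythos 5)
Your construction is correct: the enumeration of the five capacity-feasible profiles is exhaustive, the price optimizations check out (minimal regret $1$ for each singleton profile, $\nicefrac{3}{2}-v=\nicefrac{9}{10}$ for the grand-bundle profile, $3+v$ for the empty one), and with $v\in(\nicefrac{1}{2},1)$ the unique regret-optimal allocation carries welfare $v<1$, so no optimal solution of \eqref{regret-opt} can use a welfare-optimal allocation. This proves the proposition, but with a genuinely different instance than the paper's. The paper builds a $4$-player, $7$-resource model (Figure~\ref{fig:ex1}) in which the unique welfare optimum (value $1.5$) leaves four resources unsaturated and hence unpriceable, while an alternative welfare-$1$ allocation allows pricing a resource shared by two other players' bundles, ``destroying'' two regrets at once; the argument there only needs that this alternative (regret $2$) beats the best prices for the welfare optimum (regret $\nicefrac{10}{4}$), and uniqueness of the \emph{welfare} optimum does the rest. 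Your route instead augments the strict single-minded variant of Example~\ref{ex:nonexistence} with a grand-bundle player: the winning allocation saturates every item, so complementarity frees all three prices, and your logic requires uniqueness of the \emph{regret} optimum, which you obtain from the full enumeration. What each buys: your example is smaller, ties directly into the running example, and your symmetrization step (convexity of sums of maxima of affine functions plus $S_3$-averaging via Jensen) is spelled out more carefully than the paper's ``similar to Example~\ref{ex:nonexistence}, assume equal prices'' shortcut; the paper's instance, on the other hand, exhibits the arguably more instructive mechanism that a lower-welfare bundle can win not merely by saturating the market but because its priced items overlap several competitors' bundles.
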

\begin{proof}
	Consider a model~$I$ with $4$ players and $7$ resources,
	each with a capacity of~$1$ as illustrated in Figure~\ref{fig:ex1}.
	Player $1$ has two non-trivial  strategies (except the empty set)
	depicted by the two blue horizontal sets $x_1$ and $x'_1$.
	The other three players have each one non-trivial strategy
	$x_2, x_3,x_4$; see Figure~\ref{fig:ex1}.
	The valuations for non-trivial strategies are given as
	\[
	\pi_1(x_1)=1.5,
	\quad \pi_1(x'_1)=1,
	\quad \pi_2(x_2)=1,
	\quad \pi_3(x_3)=1,
	\quad \pi_4(x_4)=1.
	\]
	Clearly, the unique optimal solution to~\eqref{price-opt}
	is $x^*=(x_1,0,0,0)$ with a value of~$1.5$.
	Note that any pair of nontrivial strategies of different players
	overlaps for at least one resource.
	Starting with $x^*$, we can only define positive prices on the
	resources~$5,6,7$.
	Similar to Example~\ref{ex:nonexistence}, we can assume \wlg that all individual
	prices are equal to some scalar $\alpha\geq 0$.
	Let us compute the induced regrets for such a
	$\lambda_j(\alpha)\define \alpha$ for $j\in\{5,6,7\}$
	and $\lambda_j(\alpha)\define 0$ otherwise:
	\begin{align*}
		\Reg_1(x^*,\lambda(\alpha))
		& =\max\{3 \alpha-\nicefrac{1}{2},0\},
		\\
		\Reg_i(x^*,\lambda(\alpha))
		& =\max\{1-\alpha,0\}, \quad i =2,3,4.
	\end{align*}
	Now, for the total regret $\Reg(x^*,\lambda(\alpha))=\max\{3
	\alpha-\nicefrac{1}{2},0\}+3 \max\{1-\alpha,0\}$,
	the optimal value is $\alpha^*=\nicefrac{1}{4}$ with
	$\Reg(x^*,\lambda(\alpha^*))=\nicefrac{10}{4}.$
	Instead of $x^*$ we could also pick one of the other nontrivial
	strategies of the players~$2,3,4$, say $x_2$,
	leading to $\tilde x=(0,x_2,0,0)$ with $\pi(\tilde x)=1$.
	Then, we can only price the resources $1,2,5$ contained in $x_2$.
	It will become evident that this choice has the advantage of pricing
	resource $1$, which can ``destroy'' the regret of players~$3,4$ at once.
	Setting $\lambda_1=1$ and $\lambda_2=\lambda_5=\nicefrac{1}{2}$, we
	get
	\begin{align*}
		\Reg_1(\tilde x,\lambda)
		& =\nicefrac{3}{2}-\nicefrac{1}{2}=1,
		\\
		\Reg_2(\tilde x,\lambda)
		& =0-(1-2)=1,
		\\
		\Reg_i(x^*,\lambda(\alpha))
		& =0, \quad i =3,4.
	\end{align*}
	This leads to
	$\Reg(\tilde x,\lambda)=2<\nicefrac{10}{4}
	= \Reg(x^*,\lambda(\alpha^*)). \qedhere$
	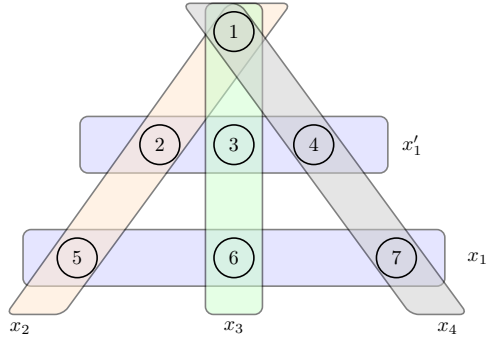
\begin{figure}
		\centering
		\scalebox{0.75}{
  \begin{tikzpicture}[scale=1,every node/.style={draw,circle,minimum size=7mm,inner sep=1pt,thick},
    line/.style={thick}]
    \draw[line,rounded corners, fill=blue!20, opacity=0.5] (-2.7,1.5) rectangle (2.7,0.5);
    \draw[line,rounded corners, fill=blue!20, opacity=0.5] (-3.7,-0.5) rectangle (3.7,-1.5);
    \draw[line,rounded corners, fill=orange!20, opacity=0.5] (-0.1,3.5) -- (1,3.5) -- (-3,-2) -- (-4,-2) -- cycle;
    \draw[line,rounded corners, fill=green!20, opacity=0.5] (-0.5,3.5) -- (0.5,3.5) -- (0.5,-2) -- (-0.5,-2)-- cycle;
    \draw[line,rounded corners, fill=black!20, opacity=0.5] (-0.9,3.5) --(0.1,3.5)-- (4.1,-2)--(3.1,-2)  --cycle;
    \node[draw=none] (z) at (3.75,-2.25)  {$x_4$};
    \node[draw=none] (z) at (-3.75,-2.25)  {$x_2$};
    \node (n1) at (0,3) {1};
    \node (n2) at (-1.3,1) {2};
    \node (n3) at (0,1)  {3};
    \node (n4) at (1.4,1)  {4};
    \node (n5) at (-2.75,-1) {5};
    \node (n6) at (0,-1)  {6};
    \node (n7) at (2.85,-1)  {7};
    \node[below=5mm of n6,draw=none] {$x_3$};
    \node[right=7mm of n7,draw=none] {$x_1$};
    \node[right=10mm of n4,draw=none] {$x'_1$};
  \end{tikzpicture}
}

		\caption{Construction of a model $I$ with $4$ players and $7$
			resources.}
		\label{fig:ex1}
	\end{figure}
\end{proof}



\section{Characterizing Regret Equilibria}\label{sec:GenVal}
For $x \in X(u)$, let us define the $x$-parameterized
welfare maximization problem as
\begin{align}
  \tag{$\mathrm{P}(x)$}\label{price-opt-restricted}
  \max_{z} \quad & \pi(z) \\
  \st \quad & \ell_j( z) \leq \ell_j( x)  \text{ for all }j\in R^=(x) ,  \notag\\
                 & z \in X, \nonumber
\end{align}
where $R^=(x) \define \Set{j \in R: \ell_j(x) = u_j}$.
Note that, when comparing the parameterized and classical social welfare  problem \eqref{price-opt},
  the parameterized problem drops the inequality constraints for all
  resources with slack, i.e., for all $j \in   \Rleq( x) \define
  \Defset{j\in R}{\ell_j(x)<u_j}$.
  In this regard, the right-hand side constraints of~\eqref{price-opt-restricted} are
  tailored to obtain market-clearing duals \wrt $x$.
Indeed,
the $x$-parameterized Lagrangian function for~\eqref{price-opt-restricted} is given by
\begin{equation*}
  L_{x}(z,\lambda) \define \pi(z) -\lambda^\top (\ell( z)- \ell(x)) \quad \text{ for }
  \lambda\in \Lambda(x, u).
\end{equation*}
Note that $ \lambda\in \Lambda(x, u)$ implies $\lambda_j=0$ for all
$j\in \Rleq( x)$.
We can define the $x$-parameterized Lagrangian dual function $\mu_{x} : \Lambda(x, u) \to \R$ via
\begin{equation*}
  \mu_{x}( \lambda) \define \max_{ z \in X} \, L_{x}( z, \lambda)
  = \max \Defset{\pi(z)-\lambda^\top (\ell( z)-  \ell(x))}{z \in X}.
\end{equation*}
The $x$-parameterized \emph{dual problem} is then defined as
\begin{align}\label{price-dual-min-restricted}
  \tag{$\mathrm{D}(x)$}
  \min_{\lambda\in  \Lambda(x, u)} \ \mu_{x}(\lambda).
\end{align}
\begin{definition}[Parameterized Duality Gap Function]
  The $x$-parameterized duality \emph{gap function} is defined as
  \begin{equation}
    \label{eq:gapfunction-restricted}
    \gamma_{x}: X(x)\times  \Lambda(x, u)\to \R_+,
    \quad
    (z,\lambda)\mapsto \gamma_{x}(z,\lambda)
    \define \mu_{x}(\lambda) - \pi(z),
  \end{equation}
  where $X(x)$ denotes the feasible set of~\eqref{price-opt-restricted}. 
  Note that by weak-duality, $\gamma_x$ is always non-negative.
  We say that \eqref{price-opt-restricted} and \eqref{price-dual-min-restricted} have a
  \emph{duality gap} of $\gamma_{x}^* \define \gamma_{x}(x^*,\lambda^*)$,
  where~$(x^*,\lambda^*)$ is a primal-dual optimal solution
  to~\eqref{price-opt-restricted} and~\eqref{price-dual-min-restricted}.
\end{definition}
Let us further define the LP-convexification of~\eqref{price-opt-restricted} as:
\begin{align}\tag{$\LP(x)$}\label{LP:price-opt-tilde-restricted}
  \max_{\alpha }\quad
  &\sum_{i\in N} \pi_{i}^\top  \alpha_i
  \\
  \label{ineq:u-convex-finite-tilde-restricted} \text{s.t.} \quad
  &\ell_j(\alpha)\leq  \ell_j(x) \quad  \text{for all }j\in R^=(x),\notag\\
  &\alpha_i \in \Simp_i, \quad i \in N, \nonumber
\end{align}
where $\pi_{i} \define (\pi_i({{z}}_i^l))_{l\in \{1,\dots,k_i\}}$,
$\ell(\alpha) \define \sum_{i\in N}  \sum_{l\in \{1,\dots,k_i\}}
\alpha_{i,l} z_{i}^l$, and $\Simp_i \define \defset{\alpha_i\in \R_{\geq
    0}^{k_i}}{ 1^\top \alpha_i=1}$, $i\in N.$
\begin{definition}[Parameterized Integrality Gap]\label{def:IntGap-restricted}
  Let $ \pi^{\LP}_{x}$ denote the optimal value of \eqref{LP:price-opt-tilde-restricted}.
    The $x$-parameterized \emph{integrality gap function} is defined as
  \begin{equation}
    \label{eq:igapfunction-restricted}
    \iota_{x}: X(x)\to \R_+, \quad z \mapsto \iota_{x}(z) \define
    \pi^{\LP}_{x}-\pi(z).
  \end{equation}
  We say that \eqref{LP:price-opt-tilde-restricted} and~\eqref{price-opt-restricted} have an
  \emph{additive integrality gap} of $\iota_{x}^* \define \iota_{x}(x^*)$,
  where~$x^*$ is an optimal solution to~\eqref{price-opt-restricted}.
\end{definition}

In an analogue fashion, we denote by $\gamma$ and $\iota$ the 
gap functions corresponding to the (unparameterized) social welfare 
problem~\eqref{price-opt} and denote by 
$\mu:\R^m_+ \to \R$ with 
\begin{align*}
	\mu(\lambda) \define \max_{ z \in X} \, L( z, \lambda)
	= \max \Defset{\pi(z)-\lambda^\top (\ell(z)-  u)}{z \in X}
\end{align*}
the corresponding Lagrangian function and the dual problem via
\begin{align}\label{price-dual-min}
	\tag{D}
	\min_{ \lambda\geq 0} \ \mu( \lambda).
\end{align}

We now derive a complete characterization for the existence of
regret Walras equilibria. For this, we first show that the regret of
$(x,\lambda)$ equals their duality gap for
\eqref{price-opt-restricted} and~\eqref{price-dual-min-restricted}.
This gap is then shown to be lower bounded by
the $x$-parameterized integrality gap $\iota_x(x)$  with equality precisely for dual-optimal $\lambda$.
Note that we do not impose any restrictions on the strategy spaces and
utility functions.

\begin{theorem}\label{thm:main-general}
For any $(x,\lambda)\in X(u)\times  \Lambda(x, u)$, the following
holds true.
\begin{align}\label{eq:general-formula-dual}
\Reg(x,\lambda)&=\gamma_{x}(x,\lambda)\geq \iota_{x}(x).\end{align}
Moreover, for any $x \in X(u)$,
$\lambda^* \in \Lambda(x,u)$ minimizes the regret for $x$
 with $\Reg(x,\lambda^*)=\iota_{ x}(x)$ if and only if
 $\lambda^*$ is optimal for the dual~\eqref{price-dual-min-restricted}.
\end{theorem}
Before proving the above theorem, let us
state an immediate consequence:
\begin{corollary}\label{cor:main-general}
	Let $\Delta>0$. Then there exists a $\Delta$-regret Walras equilibrium if and only if there exists $x \in X(u)$ with its parameterized integrality gap being bounded by $\Delta$, i.e., $\iota_{x}(x)\leq \Delta$.
\end{corollary}

Moreover, by observing that the parameterized integrality gap $\iota_x(x)$ is an upper bound on the unparameterized integrality gap $\iota(x)$ (by $\pi^{\LP}_x$ being an upper bound on the objective value of \eqref{price-opt}), we immediately get another consequence of Theorem~\ref{thm:main-general}:
\begin{corollary}\label{cor:nec}
  If $(x,\lambda)$ is a $\Delta$-regret Walras equilibrium, then the
  integrality gap of the classical social welfare problem is bounded
  by $\Delta$, i.e., $\iota^*\leq \Delta$.
\end{corollary}

Let us now come to the proof of Theorem~\ref{thm:main-general}.
\begin{proof}[Proof of Theorem~\ref{thm:main-general}]
Let $(x,\lambda)\in X(u)\times  \Lambda(x, u)$ be arbitrary.
We compute
\begin{align*}
\Reg(x,\lambda)&=\max_{z\in X}
 \left\{\pi(z)-\lambda^{\top} \ell(z)\right\}-\left(\pi(x)-\lambda^{\top} \ell(x)\right)\\
 &=\max_{z\in X}
 \left\{\pi(z)-\lambda^{\top}\big( \ell(z)- \ell(x)\big)\right\}-\pi(x)\\
 &=\mu_x(\lambda)-\pi(x)=\gamma_{x}(x,\lambda).
\end{align*}
For showing the inequality in \eqref{eq:general-formula-dual}, let us denote by $\mu^{\LP}_x(\cdot)$
the LP-dual of~\eqref{LP:price-opt-tilde-restricted}.
Observe that $\mu_x(\cdot)=\mu^{\LP}_x(\cdot)$ (see the proof of Theorem~4.2
  in \textcite{HarksSchwarzPricing}).
Moreover, let $\lambda^*\in \Lambda(x,u)$ be optimal for~\eqref{price-dual-min-restricted}.
  Then, we get:
\begin{align}
\tag{$\lambda^*$ optimal for~\eqref{price-dual-min-restricted}} \gamma_{x}(x,\lambda)=\mu_x(\lambda)-\pi(x)&\overset{(*)}{\geq} \mu_x(\lambda^*)-\pi(x)\\
&= \mu_x^{\LP}(\lambda^*)-\pi(x) \tag{$\mu_x(\cdot)=\mu^{\LP}_x(\cdot)$}\\
&= \pi_x^{\LP}-\pi(x)= \iota_{x}(x). \tag{strong LP-duality}
\end{align}
Hence, we have shown the validity of \eqref{eq:general-formula-dual}.
The last statement of the theorem now follows immediately
by  \eqref{eq:general-formula-dual} and the fact that
the inequality $(*)$ is tight for $\lambda$ precisely when $\mu_x(\lambda) =\mu_x(\lambda^*)$,
i.e., if $\lambda$ is optimal for~\eqref{price-dual-min-restricted}.
\end{proof}

Theorem~\ref{thm:main-general} (Corollary~\ref{cor:main-general}) delivers a complete characterization of achievable
$\Delta$-regret Walras equilibria by means of duality/integrality gaps of associated parameterized
optimization problems.
In the following, we apply this general theory to the important special
case of weakly monotone valuations.
\begin{definition}[Weakly Monotone Valuations and Upwards-Closed Strategy
  Spaces]
  A model $I$ has weakly monotone valuations and upwards-closed strategy
  spaces, if for all $i\in N$, we have
  \begin{align}
    \label{eq:monotone}
    &\forall x_i \in X_i,\forall y_i\in X_i:
    &x_i\leq y_i   \implies \pi_i(x_i)\leq \pi_i(y_i),
    \\
    \label{eq:strategy}
    &\forall x_i \in X_i, \forall y_i \in \Z^m:
    &   x_i\leq y_i\leq u \implies y_i \in X_i.
  \end{align}
\end{definition}
Condition~\eqref{eq:monotone} states that valuations are \emph{weakly
  monotone}, meaning that more items cannot decrease the utility.
Condition~\eqref{eq:strategy} allows to assign more items (up to
capacity) to players $i\in N$.
This is equivalent to the assumption that undesirable items can be
disposed at zero cost.
Both conditions are standard in the literature on combinatorial Walras
equilibria; see, e.g., \textcite{Blumrosen_Nisan_2007}.

When it comes to solving~\eqref{regret-opt}, it is important to
observe that we can without loss of generality
restrict the space of feasible strategy profiles to the set
\begin{equation}
  \label{eq:strat-up}
  \bar X(u) \define \Defset{x\in X}{\ell(x)=u},
\end{equation}
because any $(x,\lambda)\in X(u)\times \R^m_+$ can be transformed
to some $(x',\lambda)\in \bar X(u) \times \R^m_+$ by giving
unallocated items in $x$ to some players.
This does not increase the regret as valuations are monotone and any
item $j$ not fully allocated in $(x,\lambda)$, i.e., $\ell_j(x)< u_j$,
has prices $\lambda_j = 0$ by the fulfillment of
\eqref{eq:complementarity}.
Effectively, this assumption removes the complexity of the
complementary condition~\eqref{eq:complementarity}, i.e.,
$\lambda^\top (\ell(x)-u) = 0$, which is required for $\Delta$-regret
Walras equilibria. 
In particular, for any $x \in \bar X(u)$, the corresponding 
$x$-parameterized
welfare maximization problem becomes now the classical
 welfare maximization problem~\eqref{price-opt}.
We obtain as an immediate consequence of Theorem~\ref{thm:main-general}:
\begin{corollary}\label{cor:monotone}
Let $I$ be a resource allocation model with weakly monotone valuations
and upwards-closed strategy spaces.
Then, the following statements hold.
\begin{enumerate}
\item For any $(x,\lambda)\in  \bar X(u)\times  \Lambda(x, u)$:
$\Reg(x,\lambda)=\gamma(x,\lambda)\geq \iota(x)$.
\item For any $x \in  \bar X(u)$, we have
$\min_{\lambda\in\Lambda(x, u)}\Reg(x,\lambda)=\iota(x)$.
\end{enumerate}
Moreover, the following statements are equivalent.
  \begin{enumerate}[label=(\alph*),start=3]
  \item\label{enum:cor:opt1} $(x^*, \lambda^*)$ is primal-dual optimal
    for~\eqref{price-opt} and \eqref{price-dual-min}.
  \item\label{enum:cor:opt2} $(x^*, \lambda^*)$ is an optimal solution
    of~\eqref{regret-opt}.
  \end{enumerate}
\end{corollary}
Corollary~\ref{cor:monotone} includes the classical LP-characterization of the existence of
exact Walras equilibria by~\textcite{Bikhchandani1997} as special case.
\begin{corollary}[\textcite{Bikhchandani1997}]
For weakly monotone valuations
and upwards-closed strategy spaces,  an exact ($0$-regret) Walras equilibrium
exists if an only if there exists $x\in \bar X(u)$ with $\iota(x)=0$.
\end{corollary}

\section{Regret Bounds}
In what follows, we utilize the general theory derived so far to
give upper bounds on the minimal achievable regret.

\subsection{Regret Bounds via LP Sensitivity Analysis}
In the previous section, we demonstrated that the minimal regret of a
a regret Walras equilibrium induced by  $x \in X(u)$ is given by the corresponding parameterized
integrality gap $\iota_x(x)$.
In the following, we show that the latter can be
expressed in terms of the integrality gap and sensitivity gap \wrt changed right-hand side of the
classical convexified social welfare problem (cf.~\textcite{Bikhchandani1997})
\begin{align}
	\tag{LP}
	\label{LP:price-opt}
	\max_{\alpha} \quad
	&\sum_{i\in N}  \pi_{i}^\top  \alpha_i  \\
	\label{ineq:u-convex-finite} \text{s.t.} \quad
	&\ell(\alpha)\leq  u,\\
	&\alpha_i \in \Simp_i, \ i \in N. \nonumber
\end{align}

Consider an instance of~\eqref{LP:price-opt} with right-hand side
$u\in \Z_+^m$.
Let us denote by
\[ \rho:\Z_+^m \to \R, \quad
  u \mapsto \max \Defset{\sum_{i\in N}\pi_i^\top\alpha_i}
  {\ell(\alpha) \leq u, \ \alpha_i \in \Simp_i, \ i\in N} \]
the \emph{optimal-value function} of~\eqref{LP:price-opt} with
respect to $u$.
The idea is  to interpret~\eqref{LP:price-opt-tilde-restricted} as an instance
of~\eqref{LP:price-opt} with a changed right-hand side of
inequality~\eqref{ineq:u-convex-finite}.
Note that in~\eqref{LP:price-opt-tilde-restricted}, there is no bound on
$\ell_j(\alpha)$ whenever $j\in \Rleq(x)$.
This leads to a $u$-parameterized \emph{lift-mapping} on $ \Z_+^m$
defined as
\begin{equation}\label{eq:lift}
  \Lift:\Z_+^m\to \Z_+^m \quad\text{with}\quad
  \left(\Lift(v)\right)_j
  \define
  \begin{cases}
    n u_{\max}, &\text{ if }v_j< u_j,\\
    v_j, & \text{ if }v_j\geq u_j,
  \end{cases}
\end{equation}
for $j\in \{1,\dots,m\}$ and $u_{\max}\define \max_{j\in R} u_j$.
The lift-mapping yields the new right-hand side of an instance
of~\eqref{LP:price-opt} for which
  we can establish  the key
relationship between problem~\eqref{LP:price-opt-tilde-restricted}
and~\eqref{LP:price-opt} in terms of $\rho(\cdot)$ and,
consequently, the sensitivity gap function $\beta(\cdot,\cdot)$
which we define below.
\begin{lemma}
  \label{lem:sens-lp}
  Let $\tilde \alpha$ be an optimal solution
  to~\eqref{LP:price-opt-tilde-restricted}.
  Then, we have
  \[ \sum_{i\in N}\pi_i^\top \tilde \alpha_i =
    \rho(\Lift(\ell(x))).\]
\end{lemma}
\begin{proof}
  The value $\rho(\Lift(\ell(x)))$ corresponds to the optimal value of
  \begin{align}
    \label{lift-lp}
    \tag{LP-$\Lift(\ell(x))$}
    \max_{\alpha} \quad
    & \sum_{i\in N}\pi_i^\top\alpha_i && \\\notag
    \st \quad
    & \ell_j(\alpha) \leq \ell_j(x)\quad &&\text{for all
                                                   } j\in \Req(x),\\\notag
    & \ell_j(\alpha) \leq n\cdot u_{\max}\quad &&\text{for all }
                                                  j\in\Rleq(x),\\\notag
    & \alpha_i \in \Simp_i, \quad &&\text{for all } i\in N.
  \end{align}
  We argue that the set of feasible solutions of both problems \eqref{LP:price-opt-tilde-restricted}
  and \eqref{lift-lp} coincide, which implies the claim since both problems have
  the same objective. It is clear that the set of feasible solutions of
  \eqref{lift-lp} is a subset of  \eqref{LP:price-opt-tilde-restricted}  since we
  only added more constraints. Conversely, let  $\alpha$ be feasible for
  \eqref{LP:price-opt-tilde-restricted}. We have to show that $\ell_j(\alpha)\leq  n u_{\max}$ for
  all $j\in\Rleq(x)$ holds.
  We calculate
  \begin{align*}
    \ell_j(\alpha)
    &\overset{\text{def}}{=} \sum_{i\in N}
      \sum_{l\in\{1,\ldots,k_i\}}\alpha_{i,l}x_{i,j}^l
      \overset{(*)}{\leq} \sum_{i\in N}
      \sum_{l\in\{1,\ldots,k_i\}}\alpha_{i,l}u_j
      \leq  \sum_{i\in N}
      \sum_{l\in\{1,\ldots,k_i\}}\alpha_{i,l}u_{\max}
    \\
    &\overset{\alpha_i\in \Simp_i}{=}n u_{\max},
  \end{align*}
  where we used for the inequality $(*)$ that
  $x_{ij}^l\leq u_j$ for all $j\in R$, $l\in \{1,\ldots,k_i\}$, and
  $i\in N$ by definition of $X_i$.
  Thus, the proof is finished.
\end{proof}

\begin{definition}[Sensitivity Gap]
  \label{def:sensitivity-LP}
  For $u, v \in \Z_+^m$, we define the \emph{sensitivity gap} as
  \[ \beta(u,v) \define \rho(u)-\rho(v).\]
\end{definition}

\begin{theorem}\label{thm:general-sens}
	For any $x\in X(u)$, there exists
	 $ \lambda^*\in \Lambda(x,u)$
	such that $( x,\lambda^*)$ is a $\Delta$-regret Walras equilibrium with
	\begin{align*}
		\Delta = \beta\big(\Lift(\ell(x)),u\big)+\iota(x).
	\end{align*}
\end{theorem}
\begin{proof}
	Let  $x \in X(u)$. We show in the following that
	\begin{align*}
		\iota_x(x) = \beta\big(\Lift(\ell(x)),u\big)+\iota(x)
	\end{align*}
	holds, from which the theorem follows immediately by Theorem~\ref{thm:main-general}.

	Consider  an optimal $\alpha$ for~\eqref{LP:price-opt-tilde-restricted} and an optimal $\alpha^*$
  for~\eqref{LP:price-opt}.
  Then, we obtain
  \begin{align}
    &\iota_{ x}( x) =\sum_{i\in N}\pi_i^\top  \alpha_i-\pi( x) \tag{Definition of $\iota_{ x}( x)$}\\
    &= \rho(\Lift(\ell( x)))-\pi( x) \tag{By Lemma~\ref{lem:sens-lp}}\\
    &=\beta(\Lift(\ell( x)),u)+\sum_{i\in N}\pi_i^\top \alpha^*_i -\pi( x)\tag{Def.~\ref{def:sensitivity-LP} and $\rho(u)=\sum_{i\in N}\pi_i^\top \alpha^*_i$}\\
    &= \beta(\Lift(\ell( x)),u)+\iota( x). \tag{Definition of $\iota( x)$}
  \end{align}\qedhere
\end{proof}
\begin{remark}
  Note that the above bound recovers Corollary~\ref{cor:monotone}
  as a special case. To see this, just observe that by monotonicity of
  valuations we assumed $\ell( x)=u$ leading to $\Lift(\ell(
  x))=\Lift(u)=u$.
  Clearly $\beta(u,u)=0$ and thus we obtain the same bound as in
  Corollary~\ref{cor:monotone}.
\end{remark}
We obtain the following further remarkable consequence.
Let us rewrite~\eqref{LP:price-opt}
as an LP in the form $ \max_{\alpha} \defset{ \sum_{i\in
    N}\pi_i^\top\alpha_i}{ A \alpha\leq b}$
and~\eqref{lift-lp} in the form $ \max_{\alpha}\defset{ \sum_{i\in
    N}\pi_i^\top\alpha_i}{ A\alpha\leq b_x}$ for appropriate matrix $A$ and vectors $b$ and $b_x$,
where the only difference between $b$ and $b_x$ occurs through the
lifted right-hand side components.
Let us denote by $\kappa(A)$ the condition number of $A$,
i.e., the maximum of the absolute values of the
determinants of the square submatrices of $A$.
\begin{corollary}
  \label{cor:sens-submatrices}
   For $x\in X(u)$ and $C \define \sum_{i\in N} \norm{\pi_i}_1$, there
   exists  $\lambda^* \in \Lambda(x,u)$
   such that $(x,\lambda^*)$ is a $\Delta$-regret Walras equilibrium with
   \begin{align*}
    \Delta \leq  C\cdot k \cdot
   \kappa(A)\norm{\Lift(\ell(x))-u}_\infty.
   \end{align*}
\end{corollary}
\begin{proof}
	We argue in the following that
	\[ \beta(\Lift(\ell( x)),u)\leq C\cdot k \cdot
	\kappa(A) \norm{b_x-b}_\infty,\]
	from which the statement follows immediately by Theorem~\ref{thm:general-sens}
	and the above observation that the only difference between $b$ and $b_x$ occurs through the
	lifted right-hand side components, i.e., $ \norm{b_x-b}_\infty=\norm{\Lift(\ell(x))-u}_\infty$.

  We can invoke the sensitivity type result of \textcite[Theorem 5]{CookGST86} (see also
  \textcite{MangasarianS87}) for integral~$A$ and~$b,b_x$, where it is
  shown that there are optimal solutions $\alpha^*$
  for~\eqref{LP:price-opt}  and ${\alpha}$ for~\eqref{lift-lp},
  respectively, that satisfy
  \[ \norm{ \alpha-\alpha^*}_\infty\leq \norm{b_x-b}_\infty \cdot k
    \cdot \kappa(A).\]
  Thus, we get
  \begin{align*}
    \beta(\Lift(\ell( x)),u)
    &=\rho(\Lift(\ell( x)))-\rho(u)\\
    &= \sum_{i\in N} \pi_i^\top( \alpha_i-\alpha^*_i)\\
    &\leq \sum_{i\in N} \abs{\pi_i}^\top(\alpha^*_i+ \mathbf{1}\cdot \norm{b_x-b}_\infty
      \cdot k \cdot \kappa(A) -\alpha^*_i)\\
    &=\sum_{i\in N} \abs{\pi_i}^\top \mathbf{1} \cdot \norm{b_x-b}_\infty \cdot k \cdot \kappa(A)\\
    &=C\cdot k \cdot \kappa(A)\norm{b_x-b}_\infty.\qedhere
  \end{align*}
\end{proof}

Note that while the upper bound on the achievable regret in Theorem~\ref{thm:general-sens} is instance-specific,
the bound in Corollary~\ref{cor:sens-submatrices} allows for
a bound for a whole class of games with the same $k$, $C$, and $\kappa(A)$.

Let us conclude this subsection by coming back to our original Example~\ref{ex:nonexistence}
and check how
Theorem~\ref{thm:general-sens} works.

\begin{rstexample}{\ref{ex:nonexistence}}[continued]
  We consider the  instance of Example~\ref{ex:nonexistence} with slightly changed
  valuations  as
  described at the beginning of Section~\ref{sec:GenVal}.
  For Theorem~\ref{thm:general-sens},
  let us consider
  an arbitrary critical set, say $S_1=\{1,2\}$,
  which leads to the strategy profile $ x=(x^1,x^4,x^4)$.

  The convexification \eqref{LP:price-opt} of the social welfare problem     has the form
  \begin{align*}
    \max_{\alpha} \quad
    & \sum_{i\in \{1,2,3\}} \alpha_{i,i}
    \\
    \text{s.t.} \quad
    & \ell_j(\alpha) \leq 1, \quad j\in \{1,2,3\}\\
    & \alpha_i \in \Simp_i, \quad i=1,2,3.
  \end{align*}
  We have already argued before that the optimal value of $\nicefrac{3}{2}$ is attained
  at $\alpha^*_{i,i}=\nicefrac{1}{2}, i\in \{1,2,3\}$, $\alpha_{i,4}=\nicefrac{1}{2}, i\in \{1,2,3\}$
  and $\alpha_{i,l}^*  =0$ else.
  Now, for the lifted $\LP(\Lift( x))$, the restriction
  $\ell_3(\alpha)\leq 1$ is removed.
  One optimal solution for $\LP(\Lift(x))$ is for instance
  $\alpha_{2,2}=\alpha_{3,3}=1$ and $\alpha_{i,l} =0$ else, leading to a value of
  $2$. This way, we see that $\beta(\Lift(\ell(
  x)),u)=2-\nicefrac{3}{2}=\nicefrac{1}{2}$.
  The duality gap and integrality gap was $\iota( x) =\nicefrac{1}{2}$, so we
  obtain $\Reg( x, \lambda^*)=\nicefrac{1}{2}+\nicefrac{1}{2}=1$ for a dual optimal $\lambda^*$.
  Since we have already argued at the beginning of Section~\ref{sec:GenVal} that the optimal regret is $1$ in this situation, the bound in
  Theorem~\ref{thm:general-sens} is tight.
\end{rstexample}

\subsection{Regret Bounds via Duality Gaps}
In the previous section, we derived a bound on the achievable regret dependent on
the sensitivity of the configuration LP.
In the following, we take a different approach
and derive such a bound  only in the parameters $u_{\max}$,
$n$, and the duality gap of \eqref{price-opt} and~\eqref{price-dual-min}.
We will use the following result, bounding the sum of dual prices on
slack resources.

\begin{lemma}\label{lem:bound-slack}
  For  $( x, \lambda)\in X(u)\times \R ^m_+$, we have
  \begin{equation}\label{eq:slack}
    \sum_{j\in \Rleq( x)} \lambda_j\leq \gamma( x, \lambda).
  \end{equation}
\end{lemma}
\begin{proof}
  We get
  \begin{align}
    \pi( x)
    &= \mu( \lambda)-\gamma(x,\lambda) \tag{\text{By definition of $\gamma(x,\lambda)$}}
    \\
    &= \max_{z\in X}\{\pi(z)-
      \lambda^\top(\ell(z)-u)\}-\gamma( x, \lambda)\tag{\text{By definition of
      $\mu( \lambda)$}}
    \\
    &\geq\pi( x)- \lambda^\top(\ell(
      x)-u)-\gamma( x, \lambda).\tag{\text{As $ x\in X$}}
  \end{align}
  This implies
  \begin{equation*}
     \lambda^\top(\ell( x)-u)\geq -\gamma( x, \lambda).
  \end{equation*}
  Using the definition of $\Rleq( x)$,
  the above inequality reduces to
  \begin{equation*}
    \sum_{j\in \Rleq( x)} \lambda_j(\ell_j(
    x)-u)\geq -\gamma( x, \lambda).
  \end{equation*}
  As $\ell_j( x)\leq u_j-1$, we get $\ell_j( x)-u_j\leq
  -1$, which implies (using $ \lambda_j\geq 0$ for $j\in R$)
  \begin{equation*}
    -\sum_{j\in \Rleq( x)} \lambda_j
    \geq
    \sum_{j\in \Rleq( x)} \lambda_j(\ell_j(
    x)-u)\geq -\gamma( x, \lambda).\qedhere
  \end{equation*}
\end{proof}

\renewcommand{\tilde}{}
\begin{theorem}
  \label{thm:suf-condition}
  Let  $(\tilde x,\tilde
  \lambda) \in X(u)\times \R^m_+$  with duality gap $\gamma(\tilde x,\tilde
  \lambda)$.
  Then, there exists $\lambda^*\in \Lambda(\tilde x,u)$ such that
  $(\tilde x, \lambda^*)$ is a $\Delta$-regret
  Walras equilibrium with $\Delta\leq \gamma(x,\lambda)\cdot
  (1+(n-1)u_{\max})$.
\end{theorem}
Before we come to the proof, let us remark that the above theorem in particular
implies the existence of $\Delta$-regret
Walras equilibrium with $\Delta\leq \gamma^*\cdot
(1+(n-1)u_{\max})$ where $\gamma^*$ denotes the duality gap of \eqref{price-opt} and~\eqref{price-dual-min}.
To see this, just evoke the above theorem with a primal-dual optimal~$(x,\lambda)$.
\begin{proof}
  We define
  \begin{align*}
    \lambda_j^*=\begin{cases}\tilde
      \lambda_j, &\text{ if }\ell_j(\tilde x)= u_j,\\
      0, &\text{ else,}\end{cases}
  \end{align*}
  for each $j \in R$.
  We obtain
  \begin{align}
    \pi(\tilde x)
    &= \mu(\tilde \lambda)-\gamma(x,\lambda) \tag{\text{By definition of $\gamma$}}
    \\
    &= \max_{z\in X}\{\pi(z)-\tilde \lambda^\top(\ell(z)-u)\}-\gamma(x,\lambda)
      \tag{By definition of $\mu(\tilde \lambda)$}
    \\
    &= \max_{z\in X}\{\pi(z)-\tilde \lambda^\top \ell(z)\}+\tilde
      \lambda^\top u-\gamma(x,\lambda) .\tag{Rewrite}
  \end{align}
  With the identity
  \begin{equation*}
    \tilde \lambda^\top \ell(z)=(\lambda^*)^\top \ell(z)+\sum_{j\in
      \Rleq(\tilde x)} \tilde \lambda_j \ell_j(z) \quad \text{ for all } z \in X,
  \end{equation*}
  we can write
  \begin{equation*}
    \pi(\tilde x)\geq \max_{z\in X}\left\{\pi(z)-(\lambda^*)^\top \ell(z)
      -\sum_{j\in  \Rleq(\tilde x)} \tilde
      \lambda_j\ell_j(z)\right\}+\tilde \lambda^\top u-\gamma(x,\lambda).
  \end{equation*}
  Using $z_{ij}\leq u_j$ for all $i\in N$, $j\in R$, and $z\in X$, we
  get
  \begin{equation*}
    \tilde \lambda_j\ell_j(z)\leq n \tilde \lambda_j  u_j,
  \end{equation*}
  leading to
  \begin{equation*}
    \pi(\tilde x)\geq \max_{z\in X} \Set{\pi(z)-(\lambda^*)^\top
      \ell(z)} - \sum_{j\in  \Rleq(\tilde x)} n \tilde
    \lambda_j u_{j} +\tilde \lambda^\top u-\gamma(x,\lambda).
  \end{equation*}
  Subtracting
  \begin{equation*}
    (\lambda^*)^\top \ell(\tilde x) = \tilde \lambda^\top \ell(\tilde
    x)-\sum_{j\in  \Rleq(\tilde x)} \tilde \lambda_j
    \ell_j(\tilde x)
  \end{equation*}
  on both sides yields
  \begin{equation*}
    \pi(\tilde x)-(\lambda^*)^\top \ell(\tilde x)\geq \max_{z\in
      X}\left\{\pi(z)-(\lambda^*)^\top \ell(z) \right\}-\Gamma,
  \end{equation*}
  where
  \begin{equation*}
    -\Gamma=-\gamma(x,\lambda)-\left(\sum_{j\in  \Rleq(\tilde x)} \tilde
      \lambda_j (n \cdot u_{j}-\ell_j(\tilde x)) \right)+\tilde
    \lambda^\top (u-\ell(\tilde x)).
  \end{equation*}
  Finally, we bound $-\Gamma$ from below as follows:
  \begin{align}
    -\Gamma
    &=-\gamma(x,\lambda)-\left(\sum_{j\in  \Rleq(\tilde x)} \tilde
      \lambda_j (n \cdot u_{j}-\ell_j(\tilde x)) +\tilde
      \lambda_j(\ell_j(\tilde x)-u_j)\right) \notag
    \\
    &= -\gamma(x,\lambda)-\left(\sum_{j\in  \Rleq(\tilde x)} \tilde
      \lambda_j (n \cdot u_{j}-\ell_j(\tilde x)+\ell_j(\tilde
      x)-u_j)\right)\notag
    \\
    &=-\gamma(x,\lambda)-\left(\sum_{j\in  \Rleq(\tilde x)} \tilde
      \lambda_j (n-1) \cdot u_{j}\right)\notag\\
    &\geq -\gamma(x,\lambda)-(n-1)  u_{\max} \left(\sum_{j\in
      \Rleq(\tilde x)} \tilde \lambda_j \right)\notag
    \\
    &\geq-\gamma(x,\lambda)-(n-1) u_{\max} \gamma(x,\lambda) \tag{By
      Lemma~\ref{lem:bound-slack}}\\
    &=-\gamma(x,\lambda)(1+(n-1)u_{\max}).\notag\qedhere
  \end{align}
\end{proof}
\begin{remark}
  If we apply the above bound to our Example~\ref{ex:nonexistence}
  with the strict single-minded valuations, we  get
  a bound of $ \gamma^* (1+(n-1)u_{\max})=\frac{3}{2}$ where $\gamma^*=\frac{1}{2}$ is the duality gap of \eqref{price-opt} and~\eqref{price-dual-min},
 which is not tight
  as the optimal regret in that instance is~$1$.
\end{remark}


\section{Polynomial-Time Approximation Algorithms}

So far, Corollary~\ref{cor:monotone} and Theorem~\ref{thm:suf-condition}
are purely structural but they deliver a powerful tool to
translate known approximation algorithms for~\eqref{price-opt} into algorithms
for computing low-regret Walras equilibria. In what follows, we present a black-box
reduction of approximation algorithms towards algorithms for computing
low-regret Walras equilibria. To this end, let us specify the computational
model and the input of the problem.
Formally, the input is given by the tuple $I=(N,R,u,X,\pi)$.
As the valuation functions
$\pi_i:X_i\to \R$, $i\in N$, are arbitrary (multi-)set functions,
we cannot explicitly encode these exponentially many function values in the input.
As is common in the literature on combinatorial auctions, we assume to have
oracle access to valuations and access to
a \emph{demand oracle} that, given prices $\lambda$, outputs
an optimal strategy for the respective player.
A demand oracle for player $i\in N$ gets as input prices $ \lambda\in \R_+^m$
and returns
 \[x_i(\lambda)\in \argmax \Defset{\pi_i( x_i)-\lambda^\top
 	x_i}{x_i\in X_i}.\]
We say the demand oracle is \emph{efficient} if it runs in polynomial
time in $n$, $m$, and $\log(u_{\max})$.
Overall, we assume that $I$ is given in a \emph{succinct} way, i.e.,
for $|R|=m$, $|N|=n$, and $u_{\max}$, there is a polynomial $p$ in
$n,m,\log(u_{\max})$ such that $\langle I\rangle\leq
p(n,m,\log(u_{\max})),$ where $\langle I\rangle$ denotes the encoding
length of $I$.
Now, the first main ingredient for our black-box reduction is the
assumption that we can efficiently solve the dual of \eqref{LP:price-opt}:
\begin{align}
  \tag{D-LP}
  \label{LP-dual}
  \min_{\mu,\lambda}\quad
  &  \sum_{i\in N} \mu_i+\sum_{j\in R}\lambda_j u_j
  \\ \notag
  \text{s.t.} \quad
  &\mu_i+\sum_{j\in E}  x_{i,j}^k  \lambda_j \geq  \pi_{ik} \quad \text{for
    all }i\in N, \ k=1,\dots, k_i,
  \\
  &\mu_i\in \R, \quad i\in N, \nonumber\\
  & \lambda_j \geq 0, \quad j\in R. \nonumber
\end{align}
The dual has $n+m$ many variables and exponentially
many constraints but with the efficient demand oracle $x_i(\lambda)$,
we can invoke the ellipsoid method to compute the optimal value of
\eqref{LP:price-opt} in polynomial time;
cf.~\textcite{GroetschelLovaszSchrijver1993}.

The second main ingredient of an approximation algorithm is the
concept of \emph{integrality-gap-verifying}
feasible integral solutions as proposed in \textcite{ElbassioniFS13}
in the context of profit maximization problems in combinatorial auctions.
\begin{definition}[See Def.~2.1 by \textcite{ElbassioniFS13}]
We say that an algorithm $\ALG$ for~\eqref{price-opt} \emph{``verifies''} an
  integrality gap of (at most) $\alpha$ for~\eqref{LP:price-opt}, if
  for every model $I$ with $\ALG(I)=\tilde x\in X(u)$, we have
  $\iota(\tilde x)\leq \alpha$, where $\iota$ denotes the integrality
  gap function for~\eqref{price-opt}.
\end{definition}

\begin{theorem}\label{thm:PolyAlg}
  Let $\mathcal{I}$ be a class of instances of~\eqref{price-opt} that
  admit a polynomial-time demand oracle.
  Let $\ALG$ be an approximation algorithm verifying an integrality
  gap of (at most) $\alpha\geq 0$.
  Then, the following holds true.
  \begin{enumerate}
  \item\label{bb-alg-item-1} If $\mathcal{I}$ contains only instances with monotone
    valuations and upwards-closed strategy spaces, then there is a
    polynomial time algorithm (based on $\ALG$)
    that computes $\Delta$-regret Walras equilibria  with
    $\Delta\leq \alpha$.
  \item\label{bb-alg-item-2} If $\mathcal{I}$ contains general instances  (general
    valuations and strategy spaces), then there is a polynomial time
    algorithm (based on $\ALG$) that computes $\Delta$-regret Walras
    equilibria  with $\Delta\leq
    \alpha(1+(n-1)u_{\max})$.
  \end{enumerate}
\end{theorem}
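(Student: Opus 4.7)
The plan is to design a two-stage algorithm: first invoke $\ALG$ to obtain a capacity-feasible allocation $\tilde x \in X(u)$ that verifies the integrality gap bound $\iota(\tilde x) \leq \alpha$, then compute suitable prices so that the resulting pair achieves the required regret bound via the structural theorems of Sections~4 and~5. For the pricing step, I would solve the dual LP~\eqref{LP-dual} by the ellipsoid method: the separation problem for the exponentially many covering constraints reduces, for each player $i$, to checking whether $\max_{x_i \in X_i}\{\pi_i(x_i) - \lambda^\top x_i\} > \mu_i$, which is one call of the demand oracle at the current prices $\lambda$. Under the polynomial-time demand oracle assumption this yields an optimal dual solution $(\mu^*, \lambda^*)$ in time polynomial in $\langle I\rangle$. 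In particular, strong LP-duality together with the identity $\mu(\lambda^*) = \mu^{LP}(\lambda^*) = \pi^{LP}$ (cf.\ the proof of Theorem~\ref{thm:nec}) makes $\mu(\lambda^*)$ available and equal to the optimal LP-value.

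For part~\ref{bb-alg-item-1}, I would first normalize $\tilde x$ so that $\ell(\tilde x) = u$ by freely distributing the slack items to arbitrary players; this is feasible by the upward-closed strategy spaces and harmless by monotonicity, so $\tilde x \in \bar X(u)$. Applying Theorem~\ref{thm:monotone-integrality} to the pair $(\tilde x, \lambda^*)$ then yields $\Reg(\tilde x, \lambda^*) \leq \iota(\tilde x) \leq \alpha$, while the complementarity requirement $\lambda^* \in \Lambda(\tilde x, u)$ holds automatically from $\ell(\tilde x) = u$. Hence $(\tilde x, \lambda^*)$ is the desired $\Delta$-regret Walras equilibrium with $\Delta \leq \alpha$.

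For part~\ref{bb-alg-item-2}, we cannot in general guarantee $\ell(\tilde x) = u$ nor the existence of ``free'' items, so I fall back to Theorem~\ref{thm:suf-condition}. The duality gap of $(\tilde x, \lambda^*)$ computes as
\begin{equation*}
  \gamma(\tilde x, \lambda^*) = \mu(\lambda^*) - \pi(\tilde x) = \pi^{LP} - \pi(\tilde x) = \iota(\tilde x) \leq \alpha.
\end{equation*}
Following the construction in the proof of Theorem~\ref{thm:suf-condition}, I would define
\begin{equation*}
  \bar \lambda_j \define
  \begin{cases}
    \lambda^*_j, & \text{if } \ell_j(\tilde x) = u_j,\\
    0, & \text{otherwise,}
  \end{cases}
\end{equation*}
which manifestly lies in $\Lambda(\tilde x, u)$. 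Theorem~\ref{thm:suf-condition} then produces $\Reg(\tilde x, \bar \lambda) \leq \alpha(1 + (n-1) u_{\max})$, as required. The output $(\tilde x, \bar\lambda)$ is computable in polynomial time once $\tilde x$ and $\lambda^*$ are available.

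The main obstacle is procedural rather than structural: one must carefully verify that the ellipsoid method applied to~\eqref{LP-dual} runs in time polynomial in $\langle I\rangle = \mathrm{poly}(n, m, \log u_{\max})$, despite the number of dual constraints being exponential in the worst case. The key observation is that the dual coefficients $x_{i,j}^k$ and $u_j$ are bounded by $u_{\max}$ so have polynomial encoding length, and the separation oracle derived from the demand oracle returns strategies $x_i(\lambda)$ of polynomial encoding size by succinctness of $I$; standard ellipsoid-method analysis, cf.~\textcite{GroetschelLovaszSchrijver1993}, then yields polynomial running time. A secondary care is needed in part~\ref{bb-alg-item-2} to confirm that the bound in Theorem~\ref{thm:suf-condition} is indeed applied only to a pair with genuine duality gap at most~$\alpha$, which is precisely what the integrality-gap-verifying property of $\ALG$ gives us via the equality $\gamma(\tilde x, \lambda^*) = \iota(\tilde x)$ noted above.
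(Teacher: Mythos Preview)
Your proposal is correct and follows essentially the same approach as the paper: run $\ALG$ to get an integrality-gap-verifying allocation, solve \eqref{LP-dual} via the ellipsoid method with the demand oracle as separation oracle to obtain $\lambda^*$, then invoke Theorem~\ref{thm:monotone-integrality} (after extending to $\bar X(u)$) for the monotone case and Theorem~\ref{thm:suf-condition} (with the zeroed-out prices $\bar\lambda$) for the general case. The paper's proof is terser---it packages the two cases into Algorithms~\ref{alg3} and~\ref{alg4} and appeals to Theorem~\ref{thm:monotone} rather than Theorem~\ref{thm:monotone-integrality} for part~\ref{bb-alg-item-1}---but the substance is identical, and your extra care about the ellipsoid method's running time and the preservation of $\iota(\tilde x)\le\alpha$ under the monotone extension is appropriate.
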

\begin{proof}
  We use Algorithms~\ref{alg3} and~\ref{alg4} for the respective
  statements~\ref{bb-alg-item-1} and~\ref{bb-alg-item-2}.
  Correctness of both algorithms follows immediately by observing
  \[ \pi(\tilde x)\geq \pi^{\LP} - \alpha =
    \mu^{\LP}(\lambda^*)-\alpha=\mu(\lambda^*)-\alpha \implies \gamma(\tilde{x},\lambda^*)\leq \alpha \]
    together with Corollary~\ref{cor:monotone} and
    Theorem~\ref{thm:suf-condition}, respectively.
  \begin{algorithm}
    \caption{A black-box algorithm for computing an approximate regret
      Walras equilibrium for monotone valuations and upwards-closed
      strategy spaces.}
    \label{alg3}
    \begin{algorithmic}[1]
      \State Compute an approximate solution $\bar x\in X(u)$ using
      $\ALG$.
      \State Extend $\bar x$ to $\tilde x\in \bar X(u)$, i.e., $\bar
      x\leq \tilde x$.
      \State Compute an optimal dual solution $\lambda^*\in \R^m_+$
      of~\eqref{LP-dual}, e.g., by the ellipsoid method.
      \State \Return $(\tilde x,\lambda^*)$, which is a
      $\Delta$-regret Walras equilibrium for $\Delta\leq \alpha$.
    \end{algorithmic}
  \end{algorithm}

  \begin{algorithm}
    \caption{A black-box algorithm for computing an approximate regret
      Walras equilibrium for general valuations.}
    \label{alg4}
    \begin{algorithmic}[1]
      \State Compute an approximate solution $\tilde x\in X(u)$
      using $\ALG$.
      \State Compute an optimal dual solution $\lambda^*\in \R^m_+$
      of~\eqref{LP-dual}, e.g., by the ellipsoid method.
      \State Define for each $j\in R$:
      $\bar\lambda_j=\begin{cases}\lambda_j^*, &\text{ if
        }\ell_j(\tilde x)= u_j,\\
        0, &\text{ else.}\end{cases}$
      \State \Return $(\tilde x,\bar \lambda)$, which is a
      $\Delta$-regret Walras equilibrium for $\Delta\leq \alpha
      (1+(n-1)u_{\max})$).
    \end{algorithmic}
  \end{algorithm}
\end{proof}


\section{Lower Bounds}
\label{sec:complexity}

With Corollary~\ref{cor:nec} we can translate lower bounds on
integrality gaps of combinatorial optimization problems to lower
bounds on on the existence of $\Delta$-regret Walras equilibria.
As we show below, we can even employ NP-inapproximability results in
the spirit of~\textcite{Roughgarden:2015}
to obtain lower bounds for the existence of $\Delta$-approximate
Walras equilibria.

\subsection{Lower Bounds via Integrality Gaps}

For combinatorial auctions, the approximability and the integrality
gap of the social welfare problem has been studied intensively, see,
e.g., \textcite{FeldmanGL15}.
Let us illustrate here (pars pro toto) how we can apply Corollary~\ref{cor:nec}
by considering the \emph{maximum integral flow problem}.
We are given a directed  \emph{capacitated} graph $G=(V,E, u)$,
where $V$ are the nodes, $E$ with $|E|=m$ is the edge set and
$ u \in \R_+^m$ denote the integral edge capacities.
There is a set of players $N= \{1, \dots,
n\}$ with $n\geq 2$ and every $i \in N$ is associated with a source
sink pair $(s_i,t_i)\in V\times V$.
An  \emph{integral flow} for~$i\in N$ is a nonnegative vector
$ x_i \in \Z_+^{m}$ from the set
\begin{align*}
  X_i = \Defset{ x_i \in \Z_+^{m} }{ \sum_{j\in \delta^+(v)} x_{ij} -
  \sum_{j\in \delta^-(v)} x_{ij} = 0, \text{ for all } v\in
  V\setminus\{s_i,t_i\}},
\end{align*}
where $\delta^+(v)$ and $\delta^-(v)$ are the arcs leaving and
entering~$v$.
We assume $X_i\neq \emptyset$ for all $i\in N$ and
we denote the integral net flow reaching $t_i$
by $\val( x_i) \define  \sum_{j\in \delta^+(s_i)} x_{ij} - \sum_{j\in
  \delta^-(s_i)} x_{ij}$, $i\in N$.
In some applications (cf.~\textcite{Kelly98}), the net flow is mapped into a utility
value by some utility function $U_i:\R_+\to\R_+$
measuring the received utility from sending net flow from $s_i$ to $t_i$.
Seen as a valuation function $\pi_i : X_i \rightarrow \R$, $x_i\mapsto U_i(\val( x_i))$,
we obtain  a \emph{(generalized) network valuation}
as defined by~\textcite{GargTV25}.
To draw connections to the integrality gap of the social welfare problem, let us assume
that~$U_i$, $i\in N$, is just the identity function.
Then, the social welfare problem becomes the maximum multi-commodity
integral flow problem
\begin{align}
  \tag{MIFP}\label{mif}
  \max_{x}\quad
  & \sum_{i \in N} \val(x_i) \\
  \text{s.t.}\quad
  & \sum_{e \in \delta^+(v)} x_{i,e} - \sum_{e \in \delta^-(v)} x_{i,e} = 0,
  && \forall v \in V \setminus \{s_i, t_i\}, \notag\\[2mm]
  & \sum_{i \in N} x_{i,e} \le u_e,
  && \forall e \in E, \notag\\[2mm]
  & x_{i,e} \in \mathbb{Z}_+,
  && \forall e \in E,\, i \in N. \notag
\end{align}

\begin{proposition}[\textcite{GargVY97}]
  The multiplicative integrality gap of~\eqref{mif} is $\frac{n}{2}$,
  even for grid graphs and unit capacities.
\end{proposition}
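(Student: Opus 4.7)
The plan is to exhibit, for each $n \geq 2$, a grid instance of~\eqref{mif} with unit edge capacities whose fractional LP optimum exceeds the integer optimum by a factor of $n/2$, following the construction of Garg, Vazirani, and Yannakakis. The proof naturally splits into three parts: constructing the bad instance, lower bounding the LP value, and upper bounding the integer value.

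First I would describe the hard instance. Take a grid graph $G_n$ with unit edge capacities and embed $n$ source-sink pairs $(s_i, t_i)$ in such a way that every integral $s_i$-$t_i$ path must cross a small bottleneck cut (of size $2$), while fractionally the commodities can each route one unit by splitting their flow across $n$ parallel sub-paths. A standard realization is a brick-wall/staircase configuration in which sources and sinks alternate on opposite sides of a narrow central corridor of the grid, with the corridor providing exactly the bottleneck structure. The source-sink pairs are arranged so that any two integral realizations of distinct commodities must share at least one corridor edge.

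Next I would prove the two bounds. For the fractional direction, I would specify an explicit feasible LP solution that routes one unit per commodity, assigning flow value $1/n$ to each of $n$ edge-aligned paths per commodity so that a straightforward counting argument gives total load at most $1$ on every edge; this yields $\mathrm{OPT}_{\mathrm{LP}} \geq n$. For the integer direction, I would identify an edge cut of size $2$ separating the union of the sources from the union of the sinks. Since any integer $s_i$-$t_i$ flow must cross this cut and edges have unit capacity, the total integral flow value is bounded by $2$, giving $\mathrm{OPT}_{\mathrm{IP}} \leq 2$. Taking the ratio yields a multiplicative integrality gap of at least $n/2$, as claimed.

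The main obstacle is the combinatorial design of the grid so that both bounds hold simultaneously: the grid must admit $n$ edge-disjoint fractional splittings for each commodity while still admitting a global cut of size only $2$ separating the joint sources from the joint sinks. Once this geometry is fixed, both calculations reduce to counting, but the geometric design and the verification of edge-wise capacity feasibility of the fractional solution is the genuinely delicate part of the argument.
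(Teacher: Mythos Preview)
The paper does not actually prove this proposition; it is quoted as a known result of Garg, Vazirani, and Yannakakis, and the paper only adds a one-sentence description of their instance: the integer optimum is $1$ (a single unit of flow can be sent) while the LP optimum is $n/2$ (each of the $n$ players sends $1/2$). So your task is really to reconstruct the cited construction, and here your proposal contains a genuine inconsistency.

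You claim simultaneously that $\mathrm{OPT}_{\mathrm{LP}} \geq n$ and that there is an edge cut of size $2$ separating the union of the sources from the union of the sinks. These two assertions contradict each other: any edge cut separating all sources from all sinks bounds the \emph{total} flow across it, fractional or integral, by its capacity. A size-$2$ cut therefore forces $\mathrm{OPT}_{\mathrm{LP}}\le 2$ as well, and no gap arises. The integer bound in the actual construction is not obtained by a small global cut; rather, the grid is designed so that any integral $s_i$--$t_i$ path and any integral $s_j$--$t_j$ path (for $i\neq j$) necessarily share an edge, which with unit capacities forces $\mathrm{OPT}_{\mathrm{IP}}\le 1$. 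On the fractional side each commodity routes only $1/2$ (not $1$), split over two paths in the brick-wall pattern so that every edge carries total load at most $1$; this gives $\mathrm{OPT}_{\mathrm{LP}}\ge n/2$. The ratio is then $(n/2)/1=n/2$, matching the paper's description. The delicate part you correctly flag---the geometric design---is precisely the pairwise-crossing property of integral paths, not the existence of a small source/sink cut.
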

Note that the instance of \textcite{GargVY97} has an optimal integer
solution $x$ with value $\pi(x)=1$, where only one unit of flow can be
sent, whereas the optimal LP-solution $y$ sends a flow of
$\nicefrac{1}{2}$ per player leading to a total flow of
$\pi(y)=\nicefrac{n}{2}$. This construction leads directly to an
\emph{additive} integrality gap of $\iota(x)=\nicefrac{n}{2}-1$.
Thus, we obtain the following lower bound.
\begin{corollary}
  There are instances $I$ having (grid-graph based) network valuations so
  such that there is no pair $(x,\lambda)\in X(u)\times \Lambda(x,u)$
  with a regret of strictly less than $\nicefrac{n}{2} - 1$.
\end{corollary}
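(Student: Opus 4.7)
The plan is to combine Theorem~\ref{thm:nec}\ref{nec:2} with the additive integrality gap bound for \eqref{mif} obtained from the \textcite{GargVY97} construction. Concretely, I will exhibit a family of instances based on the grid-graph construction of~\textcite{GargVY97} in which every feasible integral allocation $x\in X(u)$ has a large additive integrality gap $\iota(x)$, and then transfer this lower bound to the regret via Theorem~\ref{thm:nec}\ref{nec:2}.

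First, I would fix the instance~$I$ using the \textcite{GargVY97} grid graph with unit capacities, $n$ player pairs $(s_i,t_i)$, and identity utility functions $U_i$, so that valuations $\pi_i(x_i) = \val(x_i)$ are (generalized) network valuations in the sense of the excerpt. The cited proposition says that the optimal value of~\eqref{mif} on this instance is $\pi(x^*) = 1$, while the LP~\eqref{LP:price-opt} admits a fractional flow of value~$\nicefrac{n}{2}$, so $\pi^{LP} = \nicefrac{n}{2}$. In particular, for every $x\in X(u)$ we have $\pi(x)\leq \pi(x^*)=1$, hence
\begin{equation*}
  \iota(x) \;=\; \pi^{LP} - \pi(x) \;\geq\; \tfrac{n}{2} - 1 .
\end{equation*}

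Next, I would invoke Theorem~\ref{thm:nec}\ref{nec:2}: if $(x,\lambda)\in X(u)\times \Lambda(x,u)$ were a $\Delta$-regret Walras equilibrium with $\Delta < \nicefrac{n}{2}-1$, then $\iota(x)\leq \Delta < \nicefrac{n}{2}-1$, contradicting the displayed inequality. This immediately rules out the existence of any pair $(x,\lambda)$ with regret strictly below $\nicefrac{n}{2}-1$.

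The only point requiring care is checking that the \textcite{GargVY97} construction can be cast verbatim within the resource-allocation model of Section~\ref{sec:model}: players correspond to commodities, resources to edges, $u$ to the unit edge capacities, $X_i$ to the integral $s_i$--$t_i$ flows, and $\pi_i = \val$. Since the model allows arbitrary integral strategy spaces and $\val$ is a linear function of~$x_i$, this identification is straightforward and the bound $\iota(x)\geq \nicefrac{n}{2}-1$ holds for \emph{every} feasible $x$, not merely an optimal one. No further computation is needed; the main conceptual step is recognizing that Theorem~\ref{thm:nec}\ref{nec:2} gives an instance-wise (rather than merely optimal-solution-wise) lower bound, because it bounds $\Delta$ by $\iota(\tilde x)$ for the specific allocation~$\tilde x$ appearing in the equilibrium.
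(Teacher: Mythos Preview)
Your proof is correct and follows essentially the same approach as the paper: both derive the additive integrality gap of $\nicefrac{n}{2}-1$ from the \textcite{GargVY97} construction and then apply Theorem~\ref{thm:nec} to transfer this lower bound to the regret of any feasible pair $(x,\lambda)$. The paper's proof additionally makes explicit that the configuration LP~\eqref{LP:price-opt} coincides with the standard fractional relaxation of~\eqref{mif} (by linearity of the objective), a point you implicitly take for granted when asserting $\pi^{LP}=\nicefrac{n}{2}$.
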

\begin{proof}
  First, we observe that \eqref{LP:price-opt}, i.e., the
  convexification of~\eqref{mif},
  can equivalently be reformulated by replacing the domains of variables $\Z_+$
  with $\R_+$, using linearity of the objective. This way, we obtain
  the fractional flow formulation as used in \textcite{GargVY97}.
  Then, Corollary~\ref{cor:nec} implies the result.
\end{proof}

\subsection{Complexity-Theoretic Lower Bounds}

While lower bounds on the integrality gap are \emph{instance-based},
we will now derive lower bounds
on the existence of $\Delta$-approximate Walras equilibria by means of $NP$-complexity
for a \emph{class} of problems.
To this end, we generalize an approach initiated
by~\textcite{Roughgarden:2015} for the case of the existence of exact
Walras equilibria.

The characterization result in Corollary~\ref{cor:nec}
together with the assumption of a polynomial-time
demand oracle can be used to establish non-existence
of $\Delta$-approximate Walras equilibria based on
complexity-theoretic assumptions like $P\neq NP$.

\begin{theorem}\label{thm:ConnectionNPComp}
  Let $\mathcal{I}$ be a class of instances that admit a
  polynomial-time demand oracle and for which the optimal value of
  problem~\eqref{price-opt} cannot be approximated within an additive
  term of $\Delta\geq 0$, unless $P= NP$. Then, assuming $P\neq NP$,
  the  guaranteed existence of $\Delta$-regret Walras equilibria
  for all instances in $\mathcal I$ is ruled out.
\end{theorem}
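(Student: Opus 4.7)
The plan is to prove the contrapositive via a reduction: if every instance in $\mathcal{I}$ admits a $\delta$-regret Walras equilibrium, then the optimal social welfare value can be approximated within additive error $\delta$ in polynomial time, which by assumption forces $P=NP$. Crucially, I do \emph{not} need to compute the purported $\delta$-regret Walras equilibrium itself; it is enough that its existence guarantees a small integrality gap, which in turn makes the LP-relaxation value a certificate for an approximation.

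The first step is to turn the existence hypothesis into an integrality-gap bound. Fix an arbitrary instance $I\in \mathcal{I}$ and suppose a $\delta$-regret Walras equilibrium $(\tilde x,\tilde\lambda)$ exists. By Theorem~\ref{thm:nec}\ref{nec:2}, we then have $\iota(\tilde x)=\pi^{LP}-\pi(\tilde x)\leq \delta$. Since $\tilde x\in X(u)$ is feasible for \eqref{price-opt}, it satisfies $\pi(\tilde x)\leq \pi^*$, where $\pi^*$ denotes the optimal value of \eqref{price-opt}. Combined with the weak-duality-type bound $\pi^*\leq \pi^{LP}$ (which holds because \eqref{LP:price-opt} is a relaxation of \eqref{price-opt}), this gives the key sandwich
\begin{equation*}
  \pi^*\;\leq\;\pi^{LP}\;\leq\;\pi^*+\delta.
\end{equation*}
Hence, simply outputting the number $\pi^{LP}$ is an additive $\delta$-approximation of $\pi^*$.

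The second step is to show that $\pi^{LP}$ can be computed in polynomial time, using only the polynomial-time demand oracle. For this I would invoke exactly the machinery already used in the preamble to Theorem~\ref{thm:PolyAlg}: the dual \eqref{LP-dual} has only $n+m$ variables but exponentially many constraints, and the polynomial-time demand oracle acts as an efficient separation oracle (given a candidate $(\mu,\lambda)$, query $x_i(\lambda)$ for each $i\in N$ and check the single inequality $\mu_i+\lambda^\top x_i(\lambda)\geq \pi_i(x_i(\lambda))$). The ellipsoid method then computes an optimal dual solution, and hence the optimal value $\pi^{LP}$, in time polynomial in $n$, $m$, and $\log(u_{\max})$; see \textcite{GroetschelLovaszSchrijver1993}. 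Combining both steps, under the hypothesis that every instance in $\mathcal{I}$ admits a $\delta$-regret Walras equilibrium, we obtain a polynomial-time algorithm that outputs $\pi^{LP}\in[\pi^*,\pi^*+\delta]$, contradicting the assumed inapproximability unless $P=NP$.

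The main subtlety to handle carefully is the interpretation of ``cannot be approximated within an additive term of $\delta$'': I would make explicit that this refers to computing a numerical estimate $v$ with $|v-\pi^*|\leq \delta$ (equivalently $\pi^*\leq v\leq \pi^*+\delta$), so that producing $v=\pi^{LP}$ is indeed a violation. A secondary technical point is the succinctness assumption on $I$ together with the bit complexity of the ellipsoid method, which is already silently assumed by the algorithmic framework of Section~\ref{sec:model}; in particular, polynomial-time separation together with polynomial encoding of $I$ yields a genuine polynomial-time algorithm rather than one merely polynomial in the dimension of~\eqref{LP-dual}.
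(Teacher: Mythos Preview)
Your proof is correct and follows essentially the same route as the paper: assume existence of $\delta$-regret equilibria, invoke Theorem~\ref{thm:nec} to bound the gap between $\pi^{LP}$ and $\pi^*$ by~$\delta$, and then compute $\pi^{LP}$ in polynomial time via the ellipsoid method with the demand oracle as separation oracle. The only cosmetic difference is that you use part~\ref{nec:2} (integrality gap) directly, whereas the paper phrases it via part~\ref{nec:1} (duality gap) together with the observation that the duals of \eqref{price-opt} and \eqref{LP:price-opt} coincide.
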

\begin{proof}
  Assume by contradiction that every instance of $\mathcal I$ admits a
  $\Delta$-approximate Walras equilibria.
  For every instance of $\mathcal I$, we
  can compute the optimal solution value of~\eqref{LP:price-opt}
  in polynomial time (by the ellipsoid method using the demand oracle
  as separation oracle). By Corollary~\ref{cor:nec}, the duality gap
  of~\eqref{price-opt} and~\eqref{price-dual-min} is bounded by~$\Delta$.
  As the duals of~\eqref{price-opt} and~\eqref{LP:price-opt}
  coincide, we can efficiently approximate  the optimal value of
  problem~\eqref{price-opt}  within an additive term of $\Delta$,
  which leads to a contradiction.
\end{proof}

\section*{Acknowledgements}

We thank Martin Bichler for helpful discussions and for providing
various pointers to the literature.
This research has been funded by the Deutsche Forschungsgemeinschaft
(DFG) in the project 543678993 (Aggregative gemischt-ganzzahlige
Gleichgewichtsprobleme: Existenz, Approximation und Algorithmen).
We acknowledge the support of the DFG.
Finally, the first and third author thank the DFG for their support
within RTG 2126 ``Algorithmic Optimization''.


\printbibliography
\end{document}